\newcommand{\EulerVec}[2][]{\mathbf{{#2}}}
\newcommand{\braket}[3][]{ \langle {#2} , {#3} \rangle }
\newcommand{\Amat}[2][]{ \mathbf{#2} }
\newcommand{\Avec}[2][]{ \mathbf{#2} }
\providecommand*{\diff}%
{\@ifnextchar^{\DIfF}{\DIfF^{}}}
\def\DIfF^#1{%
\mathop{\mathrm{\mathstrut d}}%
\nolimits^{#1}\gobblespace}
\def\gobblespace{%
\futurelet\diffarg\opspace}
\def\opspace{%
\let\DiffSpace\!%
\ifx\diffarg(%
\let\DiffSpace\relax
\else
\ifx\diffarg[%
\let\DiffSpace\relax
\else
\ifx\diffarg\{%
\let\DiffSpace\relax
\fi\fi\fi\DiffSpace}
\newcommand{\deriv}[3][]{%
\frac{\diff^{#1}#2}{\diff #3^{#1}}}
\newcommand{\AAA}{\mathbf{a}}
\newcommand{\EE}{\mathbf{e}}
\newcommand{\JJ}{\mathbf{j}}
\newcommand{\nn}{\mathbf{n}}
\newcommand{\hsol}{H_L(\Omega)}
\newcommand{\hport}{H_L(\Omega,\Sigma)}
\newcommand{\R}{\mathbb{R}}
\newcommand{\dive}{\rm{div}}
\newcommand{\Lplus}{L^\infty_+(\Omega)}
\newcommand{\dd}{\mathrm{d}}
\newtheorem{thm}{Theorem}[section]
\newtheorem{rem}[thm]{Remark}
\newtheorem{prop}[thm]{Proposition}
\newcommand{\OA}{\mathcal{A}}
\newcommand{\OP}[1]{\mathcal{#1}}
\DeclareMathOperator{\im}{im}
\newcommand{\jd}{\alpha}
\newcommand{\JD}{\mathbf{\jd}}
\begin{document}

\title{Modal Decomposition in Numerical Computation of Eddy Current Transients}

%\title{Reducing the computational cost of eddy currents simulations: \\ a modal decomposition based approach}

\author[1]{Salvatore Ventre}

\author[2]{Andrea Chiariello}

\author[3]{Nicola Isernia}

\author[1]{Vincenzo Mottola}

\author[1,4]{Antonello Tamburrino}

\authormark{S. Ventre \textsc{et al}}

\address[1]{\orgdiv{
Dipartimento di Ingegneria Elettrica e dell’Informazione}, \orgname{Universita di Cassino e del Lazio Meridionale}, \orgaddress{Via Gaetano di Biasio 43, 03043 Cassino, \country{Italy}}}

\address[2]{\orgdiv{Dipartimento di Ingegneria}, \orgname{Università degli Studi della Campania Luigi Vanvitelli}, \orgaddress{via Roma 29, 81031 Aversa, \country{Italy}}}

\address[3]{\orgdiv{Dipartimento di Ingegneria Elettrica e delle Tecnologie dell'Informazione}, \orgname{Università degli Studi di Napoli Federico II}, \orgaddress{Via Claudio 21, 80125 Napoli, \country{Italy}}}

\address[4]{\orgdiv{Department of Electrical and Computer Engineering}, \orgname{Michigan State University}, \orgaddress{East Lansing (MI-48824), \country{USA}}}

\corres{*Andrea Chiariello, \email{andreagaetano.chiariello@unicampania.it}}

%\presentaddress{This is sample for present address text this is sample for present address text}

\abstract[Summary]{A methodology to reduce the computational cost of time domain computations of eddy currents problems is proposed. It is based on the modal decomposition of the current density. In particular, a convenient strategy to perform the modal decomposition even in presence of injected currents into the electrodes of a conducting domain is proposed and implemented in a parallel computing environment. Using a theta-method integration algorithm, the performances of the the proposed approach are compared against those of a classical method based on the Cholesky factorization, for a case of interest from eddy current nondestructive testing. For this large eddy current problem (number of unknowns greater than 100k, number of time steps of interest equal to 100k) the proposed solution method is shown to be much faster than those based on standard time integration schemes.}

\keywords{eddy currents, eigenvalues, simulation}

%\jnlcitation{\cname{%
%\author{Williams K.},
%\author{B. Hoskins},
%\author{R. Lee},
%\author{G. Masato}, and
%\author{T. Woollings}} (\cyear{2016}),
%\ctitle{A regime analysis of Atlantic winter jet variability applied to evaluate HadGEM3-GC2}, \cjournal{Q.J.R. Meteorol. Soc.}, \cvol{2017;00:1--6}.}

\maketitle

%\footnotetext{\textbf{Abbreviations:} ANA, anti-nuclear antibodies; APC, antigen-presenting cells; IRF, interferon regulatory factor}

\section*{List of symbols and conventions}

\begin{table}[h]
    \centering
    \begin{tabular}{c|c}
        $\Avec[]{A}$ & discrete vector (1D array) \\
        $\Amat[]{A}$      & discrete matrix (2D array) \\
        $\EulerVec[]{a}$(x) & vector field in Euclidean three-dimensional space \\
        $\braket[]{A}{B}$ & Euclidean scalar product between scalar or vector fields in $L_2(\Omega)$ \\
        $\braket[]{A}{B}_{\Sigma_k}$ & scalar product on $\Sigma_k$\\
        $ \OP{A} $ & Linear operator $H_L(\Omega, \partial \Omega) \rightarrow L_2(\Omega, \mathbb{R}^3)$ defined by (\ref{operat}) \\
        $ \OP{I} $ & Linear operator $H_L(\Omega, \Sigma) \rightarrow \mathbb{R}^{N_e}$ defined by (\ref{eq:operator_electrodes}) \\
    \end{tabular}
    \caption{Notation adopted throughout the article.}
    \label{tab:my_label}
\end{table}

\section{Introduction}\label{sec:introduction}

In a variety of applications, ranging from the design of magnets of fusion devices or accelerators to non-destructive testing, it is necessary to perform long time-domain eddy currents simulations over very large or densely discretized conducting domains. The computation of eddy currents in bulk conductors for assigned induced or applied voltages from external sources is a standard engineering problem, which is usually addressed by the means of integral formulations and finite element techniques\cite{Albanese1988}. Mathematically, the problem is equivalent to that of solving a linear system of Ordinary Differential Equations, in presence of eventual holonomic constraints. In this manuscript we quantify when and why it is convenient to perform a modal decomposition of the original problem, rather than resorting to classic numerical integration schemes based on the Cholesky decomposition.

\section{Modal Decomposition of the Eddy Current Density }\label{sec:maths}
\subsection{Notations and functional spaces}
Throughout this paper, $\Omega$ is the region occupied by the conducting material. We assume $\Omega\subset\R^3$ to be an open bounded domain with a Lipschitz boundary and outer unit normal $\hat {\bf n}$.
We denote by $V$ and $S$ the $3$-dimensional and the bi-dimensional Hausdorff measure in $\R^3$, respectively and by $\langle\cdot,\cdot\rangle$ the usual $L^2$-integral product on $\Omega$. 

{Hereafter we refer to the following functional spaces
\begin{align}
\label{eq:L_infinity}
L^\infty_+(\Omega) & :=\{\theta\in L^\infty(\Omega) \ | \ \theta\geq c_0\ \mathrm{a.e.\ in} \ \Omega \ \mathrm{for}\ c_0>0\},\\
H_{\mathrm{div}}(\Omega) & :=\{ {\bf v} \in L^2(\Omega;\R^3) \ |\ \mathrm{div} ({\bf v}) \in L^2(\Omega)\},\\
H_{\mathrm{curl}}(\Omega) & :=\{ {\bf v} \in L^2(\Omega;\R^3) \ |\ \mathrm{curl} ({\bf v}) \in L^2(\Omega)\},\\
\hsol & :=\left\{  \mathbf{v}\in H_{\rm div}(\Omega)\   | \ \dive (\mathbf{v})=0\ \mathrm{in}\ \Omega,\ \mathbf{v}\cdot\hat{\bf n}=0\ \mathrm{on}\ \partial \Omega\right\},\\
\label{eq:HL}
\hport & :=\left\{  \mathbf{v}\in H_{\rm div}(\Omega)\   | \ \dive (\mathbf{v})=0\ \mathrm{in}\ \Omega,\ \mathbf{v}\cdot\hat{\bf n}=0\ \mathrm{on}\ \partial \Omega \backslash \Sigma \right\},
\end{align}
and to the derived spaces $L^2(0,T;H_L(\Omega))$, $L^2(0,T;\hport)$ and $L^2(0,T;H_{\rm curl}(\Omega))$, for any $0<T< +\infty$.

%where the leftmost functional space ($\hsol$) corresponds to divergence free vector fields with vanishing normal component on $\partial \Omega$, whereas the rightmost functional space ($H_L(\Omega,\partial \Omega)$) corresponds to divergence free vector fields with unrestricted normal component on $\partial \Omega$.
It is worth noting that $\hsol$ is required to account for vector fields that cannot enter/leave the domain $\Omega$, whereas $H_L(\Omega,\partial \Omega)$ is required to account for vector fields that can enter/leave the domain $\Omega$ through the entire $\partial \Omega$. Finally, $\hport$ is required to account for vector fields that can leave the domain $\Omega$ \emph{only} through surface $\Sigma$. Moreover,
\begin{equation*}
    \hsol = H_L(\Omega,\emptyset) \subset \hport \subset H_L(\Omega,\partial \Omega).
\end{equation*}

We refer to \cite{bossavit1998computational,bossavit1981numerical} for $\hsol$ and a general discussion on functional spaces related to Maxwell equations.

%\begin{eqnarray*}\fl L^2(0,T;H_L(\Omega)):=\left\{ {\bf u}:t\in]0,T[\to {\bf u}(t)\in H_L(\Omega)\ \left| \left(\int_0^T||{\bf u}||_{L^2(\Omega;\R^3)}^2dt\right.\right)^\frac12<+\infty\right\},\\ \fl L^2(0,T;H_L(\Omega)):=\left\{ {\bf u}:t\in]0,T[\to {\bf u}(t)\in H_L(\Omega)\ \left| \left(\int_0^T||{\bf u}||_{L^2(\Omega;\R^3)}^2dt\right.\right)^\frac12<+\infty\right\},\end{eqnarray*} for any $0<T\leq+\infty$.
}

\subsection{Mathematical Model for the Eddy Current Problem}
The eddy current problem, corresponding to Maxwell's equation in the Magneto-Quasi-Stationary limit (see \cite{haus1989electromagnetic}), can be cast in several different manners. Among them, an interesting formulation is that proposed by A. Bossavit (see \cite{bossavit1981numerical}) consisting in the following integral formulation in the weak form:
\begin{eqnarray}
\label{weak_form}
\left\langle \eta \JJ, \mathbf{w}\right\rangle + \langle\partial_{t}
\OA\JJ,\mathbf{w}\rangle=-\langle\partial_{t}\AAA_S, \mathbf{w}\rangle \quad \forall\mathbf{w}\in H_L(\Omega),
\end{eqnarray}
where $\JJ \in L^2(0,T;H_L(\Omega))$ is the unknown induced current density, $\Omega$ is the region occupied by the conducting material, $\OA$ is the compact, self-adjoint, positive definite operator:
\begin{eqnarray}
\label{operat}
\OA:\mathbf{v} \in H_L(\Omega,\partial \Omega) \mapsto\frac{\mu_0}{4\pi}\int_\Omega\frac{\mathbf{v}(x^\prime) }{\vert\vert x-x^\prime\vert\vert}\ \textrm{d}V(x')\in L^2(\Omega;\R^3),
\end{eqnarray}
$\eta \in \Lplus$ is the electrical resistivity, $\mu_0$ is the free space magnetic permeability, $\AAA_S \in L^2(0,T;H_{\rm curl}(\Omega))$ is the vector potential produced by the prescribed source current $\JJ_S\in L^2(0,T;H_{L}(\Omega_S))$, $\Omega_S$ is a bounded open set with Lipschitz boundary that is electrically insulated from $\Omega$.

\begin{rem}
The integral formulation of \eqref{weak_form} can be easily understood by combining (i) the constitutive relationship $\EE = \eta \JJ$, (ii) the electric field expressed in terms of vector and scalar potential as $\EE = - \partial_t \AAA - \partial_t \AAA_S -\nabla \phi$, being $\AAA$ the vector potential due to the induced current density $\JJ$ and $\phi$ the total scalar potential, and (iii) the vector potential due to the eddy current density expressed as $\AAA = \OA \JJ$:
\begin{equation}
\label{strong_form}
\eta \JJ + \partial_{t}
\OA\JJ + \nabla \phi=\partial_{t}{\bf a}_S \quad \text{ in } \ \Omega,
\end{equation}
The vector potential in \eqref{operat} is selected according the Coulomb's gauge. The Magneto-Quasi-Stationary approximation dictates the choice of the functional space ($\hsol$) and causes propagation to be neglected in the kernel of the operator $\OA$.
\end{rem}

It is worth noting that \eqref{weak_form} models the physical situation when $\partial \Omega$, the boundary of the conducting domain, is in contact with an electrical insulator, indeed $\JJ \cdot \hat{\nn} = 0 \text{ on } \partial \Omega$ since $\JJ \in \hsol$. A situation of relevant interest, which is the main focus of this contribution, is when electrical currents can flow through $\partial \Omega$ via a set of $N_e$ boundary electrodes. Let $\Sigma_k \subset \partial \Omega$, $k=1,\cdots,N_e$ be the surface patch representing the domain of the $k-$th electrode, then $\JJ \in L^2(0,T;\hport)$ where $\Sigma = \bigcup_{k=1}^{N_e} \Sigma_k$. If, in addition, (i) the electrodes are made by a perfect conducting material and (ii) the net current through each electrode is imposed and equal to $i_k(t)$, the equivalent of \eqref{weak_form}, obtained from \eqref{strong_form} written in the weak form, is
\begin{align}
\label{weak_form_Ik_a}
\left\langle \eta \JJ, \mathbf{w}\right\rangle + \langle\partial_{t}
\OA \JJ,\mathbf{w}\rangle + \sum_{k=1}^{N_e} \phi_k  \langle 1,\mathbf{w} \cdot \hat{\bf n} \rangle_{\Sigma_k} & = -\langle\partial_{t} \AAA_S, \mathbf{w}\rangle \quad \forall\mathbf{w}\in \hport\\
\label{weak_form_Ik_b}
\langle 1,\JJ \cdot \hat{\bf n} \rangle_{\Sigma_k} & = i_k^D \quad k=1,\ldots,{N_e},
\end{align}
being $\JJ \in L^2(0,T;\hport)$. In \eqref{weak_form_Ik_a} it has been exploited that (i) implies the scalar potential to be constant on each electrode, whereas (ii) implies
the constraint \eqref{weak_form_Ik_b}. It is worth noting that the $\phi_k$'s are unknown functions of time.

% A common situation of interest is when the conducting material can be accessed through $N$ perfectly conducting electrodes placed on $\partial \Omega$. In this case $\Sigma = \bigcup_{k=1}^N \Sigma_k$, and $\phi(x,t)=\phi_k(t) \text{ on } \Sigma_k$, being $\Sigma_k \subset \partial \Omega$ the $k-$th electrode. In this case, \eqref{weak_form_ports} becomes
% \begin{equation}
% \label{weak_form_phik}
% \left\langle \eta \JJ, \mathbf{w}\right\rangle + \langle\partial_{t}
% \OA \JJ,\mathbf{w}\rangle = - \sum_{k=1}^N \phi_k  \langle 1,\mathbf{w} \cdot \hat{\bf n} \rangle_{\Sigma_k} -\langle\partial_{t} \AAA_S, \mathbf{w}\rangle \quad \forall\mathbf{w}\in \hport,
% \end{equation}
% being $\JJ \in L^2(0,T;\hport)$.

% be encountered When, for instance, electrical currents can flow through $\Sigma$ and the electric scalar potential $\phi$ is prescribed on $\Sigma$, it is possible to prove that the integral formulation in the weak form is: 
% \begin{equation}
% \label{weak_form_ports}
% \left\langle \eta \JJ, \mathbf{w}\right\rangle + \langle\partial_{t}
% \OA \JJ,\mathbf{w}\rangle = - \langle \phi,\mathbf{w} \cdot \hat{\bf n} \rangle_\Sigma -\langle\partial_{t} \AAA_S, \mathbf{w}\rangle \quad \forall\mathbf{w}\in \hport,
% \end{equation}
% being $\JJ \in L^2(0,T;\hport)$.

\begin{rem}
\label{remark:Ik_definition}
The linear functional:    
\begin{equation}
    \OP{I}_k: \mathbf{w} \in \hport \mapsto - \langle 1,\mathbf{w} \cdot \hat{\bf n} \rangle_{\Sigma_k} = - \int_{\Sigma_k} \mathbf{w} \cdot \hat{\bf n} \dd S \in \mathbb{R}
\end{equation}
gives the net electrical current entering $\Omega$ through the $k-$th electrode $\Sigma_k$ and due to $\bf w$.
\end{rem}

An effective approach to pose problem \eqref{weak_form_Ik_a}, \eqref{weak_form_Ik_b} consists in decomposing $\hport$ in terms of an orthogonal direct sum:
\begin{equation}
    \label{eq:direct_sum}
    \hport = H_0 \oplus H_D,
\end{equation}
where
\begin{equation}
H_0 = \left\{ \mathbf{u} \in \hport \, | \, \langle 1, \mathbf{u} \cdot \hat{\bf n} \rangle_{\Sigma_k} = 0, \ k=1,\ldots,N_e \right\},
\end{equation}
and $H_D=H_0^\bot$.

It results that $\dim \left( H_D \right)=N_e-1$, as from the following proposition.
\begin{prop}
\label{prop:dimension_HD}
\label{propDim}
    If $\Omega$ is a domain and on its boundary $\partial \Omega$ there are $N_e$ electrodes/ports $\Sigma_1, \ldots, \Sigma_{N_e}$, then $\dim \left( H_D \right)=N_e-1$.
\end{prop}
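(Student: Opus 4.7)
The plan is to identify $H_D$ with the image of the ``current vector'' map $\mathcal{I} = (\mathcal{I}_1,\ldots,\mathcal{I}_{N_e}): \hport \to \mathbb{R}^{N_e}$ built from the linear functionals of Remark \ref{remark:Ik_definition}. By construction $H_0 = \ker \mathcal{I}$, and since $H_D = H_0^\perp$ with respect to the $L^2$ inner product, the decomposition \eqref{eq:direct_sum} yields a linear isomorphism $H_D \cong \hport / H_0 \cong \mathrm{im}(\mathcal{I})$. The whole problem therefore reduces to computing $\dim \mathrm{im}(\mathcal{I}) \subset \mathbb{R}^{N_e}$.

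First I would establish the upper bound $\dim \mathrm{im}(\mathcal{I}) \le N_e - 1$ by exhibiting one nontrivial linear relation among the $\mathcal{I}_k$'s. For any $\mathbf{u} \in \hport$, the divergence theorem, combined with $\mathrm{div}(\mathbf{u})=0$ in $\Omega$ and $\mathbf{u}\cdot\hat{\mathbf{n}}=0$ on $\partial\Omega\setminus\Sigma$, gives
\begin{equation*}
0 = \int_\Omega \mathrm{div}(\mathbf{u})\,\dd V = \sum_{k=1}^{N_e}\int_{\Sigma_k}\mathbf{u}\cdot\hat{\mathbf{n}}\,\dd S = -\sum_{k=1}^{N_e}\mathcal{I}_k(\mathbf{u}),
\end{equation*}
so $\mathrm{im}(\mathcal{I})$ is contained in the hyperplane $V=\{\mathbf{i}\in\mathbb{R}^{N_e}:\sum_k i_k = 0\}$, which has dimension $N_e-1$.

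For the matching lower bound I would show that $\mathcal{I}$ maps \emph{onto} $V$ by constructing, for an arbitrary $(i_1^D,\ldots,i_{N_e}^D)\in V$, a field $\mathbf{u}\in\hport$ with $\mathcal{I}_k(\mathbf{u})=i_k^D$. Pick any $g\in L^2(\partial\Omega)$ supported on $\Sigma$ with $\int_{\Sigma_k}g\,\dd S = -i_k^D$ (for instance $g=-i_k^D/S(\Sigma_k)$ on $\Sigma_k$); because $\sum_k i_k^D = 0$, the global compatibility condition $\int_{\partial\Omega}g\,\dd S=0$ is satisfied. Solve the mixed Neumann problem $\Delta\psi = 0$ in $\Omega$ with $\partial_{\hat{\mathbf{n}}}\psi = g$ on $\partial\Omega$ (well-posed up to an additive constant in a Lipschitz domain), and set $\mathbf{u}=\nabla\psi$. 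Then $\mathbf{u}\in H_{\mathrm{div}}(\Omega)$, $\mathrm{div}(\mathbf{u})=0$, the normal trace vanishes on $\partial\Omega\setminus\Sigma$, and $\mathcal{I}_k(\mathbf{u})=-\int_{\Sigma_k}g\,\dd S = i_k^D$, giving $\mathbf{u}\in\hport$ with the prescribed currents.

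Combining the two inclusions, $\mathrm{im}(\mathcal{I})=V$ and hence $\dim H_D = \dim V = N_e-1$. The only delicate step is the surjectivity argument: one must invoke the solvability of the Neumann problem on a Lipschitz domain and verify that the gradient of its (weak) solution indeed lies in $\hport$ with the correct normal trace on each patch $\Sigma_k$; the rest is a direct application of the divergence theorem and the isomorphism theorem for the quotient $\hport/H_0$.
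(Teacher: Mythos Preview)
Your argument is correct and follows the same line as the paper's proof: identify $H_0=\ker\mathcal{I}$, use the isomorphism $H_D\cong\hport/H_0\cong\im(\mathcal{I})$, and invoke the divergence theorem to constrain $\im(\mathcal{I})$ to the hyperplane $\{\sum_k i_k=0\}$. The one substantive difference is that the paper simply asserts $\dim(\im(\mathcal{I}))=N_e-1$ after establishing the inclusion in the hyperplane, whereas you supply the matching surjectivity step via an explicit Neumann construction; your version is therefore more complete on precisely the point the paper leaves implicit.
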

\begin{proof}
First, we notice that $H_0=\ker(\OP{I})$, where $\OP{I}$ is the operator giving the net electrical current entering the electrodes/ports $\Sigma_1, \ldots, \Sigma_{N_e}$
\begin{equation}
\label{eq:operator_electrodes}
        \OP{I}: \mathbf{w} \in \hport \mapsto
        \left[
        \begin{array}{c}
            \OP{I}_1 \mathbf{w} \\
            \vdots  \\
            \OP{I}_{N_e} \mathbf{w} \\
        \end{array}
        \right]
        \in \mathbb{R}^{N_e}.
\end{equation}
% \begin{equation}
%         \OP{I}: \mathbf{w} \in \hport \mapsto
%         \left[
%         \begin{array}{c}
%             - \langle 1,\mathbf{w} \cdot \hat{\bf n} \rangle_{\Sigma_1} \\
%             \vdots  \\
%             - \langle 1,\mathbf{w} \cdot \hat{\bf n} \rangle_{\Sigma_{N_e}} \\
%         \end{array}
%         \right]
%         \in \mathbb{R}^N,
% \end{equation}

Then, we notice that $\dim \left( \im \left( \OP{I} \right) \right)=N_e-1$. Indeed, since $\Omega$ is assumed to be a domain (a non-empty connected open set), it results that $\sum_{k=1}^{N_e} \OP{I}_k \mathbf{w} = \int_{\partial \Omega} \mathbf{w} \cdot \hat{\bf n} \dd S =0$, via a divergence Theorem. This proves that the dimension of the preimage of $\im \left( \OP{I} \right)$ under $\OP{I}$ is equal to $N_e-1$.

The conclusions follows by recognizing that the latter, the preimage of $\im \left( \OP{I} \right)$ under $\OP{I}$, coincides with $H_D$.
\end{proof}

From \eqref{eq:direct_sum} and Proposition \ref{propDim} it follows that an arbitrary element of $\hport$ can be decomposed as:
\begin{equation}
    \mathbf{w}(x) = \sum_{k=1}^{N_e-1} w^D_k \mathbf{s}^D_k(x) + \mathbf{w}_0(x),
\end{equation}
where $\left\{ \mathbf{s}^D_1, \ldots, \mathbf{s}^D_{N-1} \right\}$ forms a basis of $H_D$ and $\mathbf{w}_0 \in H_0$. This decomposition induces a similar one on $L^2(0,T;\hport)$, that gives the unknown $\JJ$ to be represented as
\begin{equation}
\label{jdec}
    \mathbf{\JJ}(x,t) =  \mathbf{j}_D(x,t) +  \mathbf{j}_0(x,t) = \sum_{k=1}^{N_e-1} \jd^D_k(t) \mathbf{s}^D_k(x) + \mathbf{j}_0(x,t),
\end{equation}
being $\mathbf{j}_0 \in L^2(0,T;H_0)$.

Decomposition \eqref{jdec} allows to solve constraint \eqref{weak_form_Ik_b} easily. Indeed, when plugging \eqref{jdec} in \eqref{weak_form_Ik_b} one has
\begin{equation}
\label{impocurr}
    \Amat{F} \, \Avec{\jd^D}(t)=\Avec{i^D}(t)
\end{equation}
where $\JD^D(t)=\left[ \jd^D_1(t), \ldots, \jd^D_{N-1}(t) \right]^T$, $\mathbf{i}^D(t)=\left[ i_1^D(t), \ldots, i_N^D(t) \right]^T$ and the matrix $\mathbf{F} \in \mathbb{R}^{N_e \times (N_e-1)}$ is defined as:
\begin{equation}
    \label{eq:definition_electrode_incidence_matrix_continuum}
    F_{kl}=\OP{I}_k \mathbf{s}^D_l = \langle 1,  \mathbf{s}^D_l(x) \cdot \hat{\bf n} \rangle_{\Sigma_k}.
\end{equation}
Since $\mathbf{F}$ is a full rank matrix, the solution of \eqref{impocurr} is:
\begin{equation}
\label{alphaD}
    \Avec{\jd^D}(t)=\left( \Amat{F}^T \Amat{F} \right)^{-1} \Amat{F}^T \Avec{i^D}(t).
\end{equation}

At this stage, $\JJ_D$ is known and constraint \eqref{weak_form_Ik_b} is satisfied. The equation for $\JJ_0$ comes by imposing \eqref{weak_form_Ik_a} in $H_0$ rather than in $\hport$. By doing this, it turns out that
\begin{equation}
\label{weak_form_J0}
\left\langle \eta \JJ_0, \mathbf{w}\right\rangle + \langle\partial_{t}
\OA \JJ_0,\mathbf{w}\rangle  = - \left\langle \eta \JJ_D, \mathbf{w}\right\rangle - \langle\partial_{t}
\OA \JJ_D,\mathbf{w}\rangle -\langle\partial_{t} \AAA_S, \mathbf{w}\rangle \quad \forall\mathbf{w}\in H_0,
\end{equation}
where $\mathbf{j}_0 \in L^2(0,T;H_0)$.
Equation \eqref{weak_form_J0}, together with \eqref{alphaD}, is the governing equation for this contribution.

\begin{rem}
\label{remark:as_ad_a0}
    In the weak formulation \eqref{weak_form_Ik_a} or its equivalent \eqref{weak_form_J0}, it is convenient to split the source vector potentials in two terms: a first term $\AAA_D$ devoted to the \emph{direct} driving of electrical currents in the conducting domain $\Omega$ and another term $\AAA_0$ accounting for electrical currents that are not entering into $\Omega$. In particular the operator
    \begin{eqnarray}
    \OA_S:\mathbf{v} \in H_{L}(\Omega_S, \partial \Omega_s) \mapsto\frac{\mu_0}{4\pi}\int_{\Omega_S} \frac{\mathbf{v}(x^\prime) }{\vert\vert x-x^\prime\vert\vert}\ \textrm{d}V(x')\in L^2(\Omega_S;\R^3),
    \end{eqnarray}
    defines the magnetic vector potential in any point of space due to the source current distribution $\mathbf{v}(x) \in H_{L}(\Omega_S, \partial \Omega_S)$. It can be easily understood that by zeroing the $i_k$'s one get a vanishing $\AAA_D$, too.
\end{rem}

\subsection{Modal Decomposition}
The modal decomposition for eddy current problems has been introduced and developed in \cite{tamburrino2021themonotonicity} for problem \eqref{weak_form}.

The modal decomposition follows from space/time separation of variables applied to the source free problem. This is a key point to extend the results in \cite{tamburrino2021themonotonicity} of the modal decomposition for \eqref{weak_form} (absence of boundary electrodes) to the eddy current problem in the presence of injected electrical current via a set of boundary electrodes, as described in \eqref{weak_form_J0} and \eqref{alphaD}. Indeed, by zeroing \emph{all} source terms, problems \eqref{weak_form} and \eqref{weak_form_J0} become
\begin{align}
\label{source_free_a}
\left\langle \eta \JJ, \mathbf{w}\right\rangle + \langle\partial_{t}
\OA\JJ,\mathbf{w}\rangle & = 0, \quad \forall\mathbf{w}\in H_L(\Omega)\\
\label{source_free_b}
\left\langle \eta \JJ_0, \mathbf{w}\right\rangle + \langle\partial_{t}
\OA \JJ_0,\mathbf{w}\rangle  & = 0, \quad \forall\mathbf{w}\in H_0,
\end{align}
where $\JJ \in L^2(0,T;\hsol)$ and $\JJ_0 \in L^2(0,T;H_0)$. It is clear that the problems giving rise to the modal decomposition in the absence of boundary electrodes (see \eqref{source_free_a}) or in the presence of boundary electrodes (see \eqref{source_free_b}) are similar and, indeed, they differ only for the underlying functional space that is $L^2(0,T;\hsol)$ and $L^2(0,T;H_0)$, respectively. Luckily, the change of functional space has a minor impact and it is possible to prove that all the results developed for modal decomposition for problem \eqref{source_free_a} (see \cite{tamburrino2021themonotonicity}) can be extended to the problem \eqref{source_free_b}. Here, for the sake of completeness, we briefly summarize the key findings.

\medskip
\textbf{\emph{Modal Decomposition}}

The solution of problem \eqref{weak_form_J0} can be expanded in terms of a Fourier series as
\begin{eqnarray}
\label{sum_f}
\mathbf{j}_0\left(x,t\right)  =\sum_{n=1}^\infty i_n\left(t\right)\ {\bf j}_n (x)\quad\textrm{in}\ \Omega\times [0,T],
\end{eqnarray}
where $i_n \in L^2\left(0,T\right)$ and $\left\{ {\bf j}_n \right\}_{n=1}^{+ \infty}$ is the set of modes. In other terms, we have the following fundamental decomposition
\begin{equation*}
L^{2}\left(0,T; H_0 \right)=\overline{L^2\left(0,T\right)\otimes H_0}.    
\end{equation*}

%\vspace{5mm}
\medskip
\textbf{\emph{A Generalized Eigenvalue Problem}}

The modes, and the related eigenvalues, are the solutions in $H_0$ of the following generalized eigenvalue problem:
\begin{equation}
\label{eq:generalized_eigenvalues}
\left\langle \OA \JJ,\mathbf{w}\right\rangle =\tau(\eta)\left\langle \eta\JJ, \mathbf{w}\right\rangle\quad \forall\ \mathbf{w}\in H_0.
\end{equation}

\medskip
\textbf{\emph{Properties of the modes and eigenvalues}}
\begin{enumerate}
\item the generalized eigenvalues and eigenvectors form countable sets:
$\left\{ \tau_{n}(\eta)\right\}_{n \in \mathbb{N}}$ and $\left\{ \mathbf{j}_{n}\right\}_{n\in \mathbb{N}}$;
\item the set of eigenvectors $\{\mathbf{j}_{n}\}_{n \in \mathbb{N}}$ forms a complete basis in $H_0$;
\item The elements ${\bf j}_n$ are orthogonal with respect to $\eta$, i.e. $\langle \mathbf{j}_n,\eta\mathbf{j}_{m}\rangle=0\quad \forall\ n\neq m$;
\item The elements ${\bf j}_n$ are orthogonal with respect to $\OA$, i.e. $\langle \mathbf{j}_n, \OA \mathbf{j}_m\rangle=0\quad \forall\ n\neq m$;
\item the eigenvalues can be ordered such that $\tau_{n}(\eta)\geq\tau_{n+1}(\eta)$;
\item $\tau_{n}(\eta)>0$ and $\lim_{n\rightarrow+\infty}\tau_{n}(\eta)=0$;
\item the following max-min variational characterization of $\tau
_{n}(\eta)$ holds:
\begin{eqnarray}
\label{second_charac}
\tau_{n}(\eta) = \max_{ \dim \left( U \right) = n } \min_{ \mathbf{j} \in U} \frac{\langle \OA  \mathbf{j}, \mathbf{j}\rangle  }{\langle \eta  \mathbf{j}, \mathbf{j}\rangle },
\end{eqnarray}
where $U$ is a linear subspace of $H_0$;
\item the eigenvalues satisfies a Monotonicity Principle w.r.t. the electrical resistivity $\eta$:
\[
\eta_{1} \leq\eta_{2}
\ \textrm{a.e. in}\ \Omega\quad \Longrightarrow\quad\tau_{n}\left(  \eta_{1}\right)  \geq\tau
_{n}\left(  \eta_{2}\right)  \ \forall n\in%
%TCIMACRO{\U{2115} }%
%BeginExpansion
\mathbb{N}
%EndExpansion
,
\]
where $\eta_{1},\eta_{2}\in L^{\infty}_+ \left( \Omega \right)$, $\tau_{n}\left(  \eta_{1}\right)$ and $\tau_{n}\left(  \eta_{2}\right)  $ are the $n-$th eigenvalues related to $\eta_{1}$ and $\eta_{2}$, respectively;
\end{enumerate}

\medskip
\textbf{\emph{Decoupling the Eddy Current Problem}}

Once the modes $\left\{ \mathbf{j}_{n}\right\}_{n\in \mathbb{N}}$ have been computed for a given electrical resistivity $\eta$, the eddy current density is determined from the $i_n$s. The evolution of these functions of the time is determined by the following \emph{first order linear} ODE:
\begin{eqnarray}
\label{ode_f2}
l_n i_n' + r_n i_n=\mathcal{E}^D_n + \mathcal{E}^S_n \quad\forall n \in \mathbb{N}.
\end{eqnarray}
In (\ref{ode_f2})
\begin{align}
    r_n & = \left\langle \eta {\bf j}_n, {\bf j}_n\right\rangle, \
    l_n  =\left\langle \OA {\bf j}_n,{\bf j}_n \right\rangle
    \\
    \mathcal{E}^D_n & =- \left\langle \eta \JJ_D, {\bf j}_n \right\rangle - \langle\partial_{t}
\OA \JJ_D,{\bf j}_n \rangle
    \\
    \mathcal{E}^S_n & =-\langle\partial_{t} \AAA_S, {\bf j}_n \rangle.
\end{align}
%$r_n=\left\langle \eta {\bf j}_n, {\bf j}_n\right\rangle$, $l_n=\left\langle \OA {\bf j}_n,{\bf j}_n \right\rangle$,  $\mathcal{E}_n=- \left\langle \eta \JJ_D, {\bf j}_n \right\rangle - \langle\partial_{t} \OA \JJ_D,{\bf j}_n \rangle -\langle\partial_{t} \AAA_S, {\bf j}_n \rangle$, and $\mathcal{E}_n=- \left\langle \eta \JJ_D, {\bf j}_n \right\rangle - \langle\partial_{t} \OA \JJ_D,{\bf j}_n \rangle -\langle\partial_{t} \AAA_S, {\bf j}_n \rangle$.

{Hereafter we assume the unit of $i_n$ to be that of an electrical current (A), in the International System of Units (SI). The unit of $\textbf{j}_n$ is, therefore, the inverse of an area ($\textrm{m}^{-2}$), the unit of $r_n$ is electrical resistance ($\Omega$), the unit of $l_n$ is henry (H), and the unit of $\mathcal{E}^D_n$ and $\mathcal{E}^S_n$ is volt (V).}

Equation \eqref{ode_f2} is the key to fast solution of eddy current problems: once the modes $\left\{ \mathbf{j}_{n}\right\}_{n\in \mathbb{N}}$ are available, the computation of the $i_n$s is modeled by decoupled ODEs. Their solution can be computed (i) separately, (ii) with different integration methods, (iii) in a parallel environment, and (iv) in a selective manner.

\section{Modal Decomposition in the discrete model}
\label{sec:discrete_model}
\subsection{Discrete model}
The eigenvectors $\mathbf{j_n}(x)$ introduced in (\ref{sum_f}) are in general not known analytically except that for simple geometries, \textit{e.g.} the case of the sphere is commented in \cite{Pascale2020}. In general, for conductors of arbitrary shape or non-uniform conductivity we will have to find suitable approximations of the original problem. In particular, for each functional space $V$ defined in equations (\ref{eq:L_infinity})-(\ref{eq:HL}) we may consider a family of finite dimensional subspaces $V_h$ depending on a positive parameter $h>0$. In the limit $h \rightarrow 0$ the dimension of the subspace $V_h$ tends to infinity and the vector subspace $V_h$ tends to the original functional space $V$. Typically $h$ is representative of the mesh size used to discretize the domain $\Omega$. In the following we fix the value of the $h$ parameter and omit it from notation. In Reference \cite{Albanese1988} you can find the construction of a finite dimensional approximation of the functional space $H_L(\Omega,\Sigma)$, based on the use of edge shape functions. For what we aim to discuss in this Section, it is sufficient to assume that we are able to identify a set of $N$ linearly independent vectors $\{ \mathbf{w_k}(x) \in H_L(\Omega,\Sigma) \}$ which generates a finite dimensional approximation $X$ of $H_L(\Omega,\Sigma)$:
\begin{equation}
        \label{eq:finite_dimensional_vector_space}
        X = \mathrm{span}\left( \mathbf{w_k}(x) , \, k=1,\cdots,N\right) \subset H_{L}(\Omega,\Sigma) \, .
        % \left \{ \mathbf{v}(x) \in H_{L}(\Omega) : \mathbf{v}(x)=\sum_{k=1}^{N}{I_k \mathbf{w_{k}}(x)}, \, \forall I_k \in \mathbb{R} \right \}
\end{equation}
Any vector $\Avec[]{v} \in X$ can be clearly identified with its components in the vector basis $\{ \mathbf{w_k} \}$, \textit{i.e.} $\mathbf{v}(x) \equiv (I_1, \cdots, I_N) \in \mathbb{R}^{N}$. The construction of the linearly independent vectors $\mathbf{w_k}$ which span $X$ is carried out in such a way that 
\begin{equation}
    H_D \subset X %\subset H_L(\Omega,\Sigma) \, ,
\end{equation}
\textit{i.e.} the finite-dimensional space $H_D$ of current distributions crossing the $\Sigma_k$ electrodes is certainly contained in our finite dimensional approximation $X \subseteq H_L(\Omega,\Sigma)$. 

In the discrete context, we look for the $\mathbf{j}(x,t) \in L_2(0,T,X)$ satisfying equations (\ref{weak_form_Ik_a})-(\ref{weak_form_Ik_b}) for any test vector in the basis of $X$, \textit{i.e.} for any vector $\EulerVec[]{w}(x) \in \{ \EulerVec[]{w_k}(x) , k=1,\cdots, N\}$. Explicitly the problem takes the form
\begin{align}
    \label{eq:discrete_version_1a}
    \Amat[]{R} \Avec[]{I} + \Amat[]{L} \deriv[]{}{t} \Avec[]{I} + \Amat[]{E}^T \Avec[]{\phi} = \Avec[]{V_0} + \Avec[]{V_D} \\
    \label{eq:discrete_version_1b}
    \Amat[]{E} \Avec[]{I} = \Avec[]{i_D}
\end{align}
Here we introduced the resistance $\Amat{R}$, inductance $\Amat{L}$ and the electrodes-currents \textit{reduced} incidence matrix $\Amat{E}$, besides the vectors of applied voltages $\Avec{V_0}$ and $\Avec{V_D}$. Given the basis of $X$, $\{ \mathbf{w_k} , k =1,\cdots,N \}$ , the coefficients of the resistance and inductance matrix are defined by
\begin{align}
    R_{jk}   = \braket[]{\eta \EulerVec{w}_j}{\EulerVec{w}_k}    \, , j,\,k=1,\cdots,N, \\
    L_{jk}   = \braket[]{ \OA \EulerVec{w}_j}{\EulerVec{w}_k}    \, , j,\,k=1,\cdots,N.
\end{align}
By definition the above matrices are symmetric and positive definite. In particular the matrix $\Amat[]{R}$ is positive definite if and only if the resistivity $\eta$ is strictly positive, circumstance which we assume valid here. The reduced incidence matrix $E$, as clear by inspection of equation (\ref{weak_form_Ik_a}) and remark \ref{remark:Ik_definition}, is defined as
\begin{equation}
\label{eq:definition_E}
    \left(E  \right)_{j,k} = \mathcal{I}_j \mathbf{w}_k = - \braket[]{1}{\mathbf{w_k}\cdot \mathbf{\hat{n}}}_{\Sigma_j} \quad \forall j = 1, \cdots, N_e-1 , \quad \forall k = 1, \cdots, N\, .
\end{equation}
{Notice that we arbitrarily excluded one electrode from the incidence matrix. The additional row would have indeed resulted to be linearly dependent from the others, as demonstrated in proposition \ref{prop:dimension_HD}. The currents impressed in the electrodes $\Avec[]{i_D}$ were already introduced in the continuous model. Further, we defined the induced voltages considering separately current sources distributions which are entering $\Omega$ and the current sources which are solenoidal in $\Omega_s$, as indicated in remark \ref{remark:as_ad_a0}:
% since the corresponding column would result to be linearly dependent form the other ones. Indeed, since our domain $\Omega$ is a connected open set, we can fix arbitrarily the electric potential of one of the electrodes: the current fluxes depend only on the potential differences (voltages) between electrodes. Consistently we consider the adjunct of the reduced incidence matrix, rather than the adjunct of the complete incidence matrix, also in equation (\ref{eq:discrete_version_1b}). This way we get rid of a redundant equation, as already commented in
\begin{align}
    \label{eq:Vd_voltage}
    (V_D)_j = \braket[]{- \partial \EulerVec[]{a_d} / \partial t }{\EulerVec{w_j}} \\
    \label{eq:V0_voltage}
    (V_0)_j = \braket[]{- \partial \EulerVec[]{a_0} / \partial t }{\EulerVec{w_j}}
\end{align}

Exactly as in the continuous model, it is convenient to introduce an orthogonal decomposition to distinguish between currents internal to the conducting domain and currents associated to fluxes through electrodes. In different words, we are looking for the discrete analog of decomposition (\ref{eq:direct_sum}),
\begin{equation}
    \label{eq:direct_sum_discrete}
    X = X_0 \oplus X_D \, ,
\end{equation}
In equation (\ref{eq:direct_sum_discrete}) $X_0$ is the discrete approximation of $H_0$, \textit{i.e.}
\begin{equation}
    X_0 = \{ \mathbf{v}(x) \in X : \mathbf{v}(x) \cdot \hat{\mathbf{n}} = 0 \quad \forall x \in \partial \Omega \} \, .
\end{equation}
Notice that the reduced incidence matrix $\Amat[]{E}$ is the discrete version of the operator $\mathcal{I}$ introduced in equation (\ref{eq:operator_electrodes}), although excluding the current flux through an arbitrary electrode. In particular in the kernel of $\Amat[]{E}$ we will have all those vectors of $X$ which do not determine current fluxes through the electrodes, \textit{i.e.}
\begin{equation}
    X_0 \equiv \ker{(\Amat[]{E})} \, .
\end{equation}
Clearly $X_0$ has dimension $N_0 = N - (N_e-1)$. We collect the basis vectors for the null space of $\Amat[]{E}$ as columns of the matrix $\Amat[]{K} \in \mathbb{R}^N \times \mathbb{R}^{N_0}$. By simple linear algebra arguments, the orthogonal subspace $X_D$ is generated by the image of $\Amat[]{E}^T$:
\begin{equation}
    X_D \equiv \im{(\Amat[]{E}^T)} \, .
\end{equation}
which has dimension $N_e-1$. Since $X_D$ has the same dimensions as $H_D$, and linear combinations of vectors in $X_D$ are sufficient to describe any admissible current flow across the electrodes $\Sigma_k$, it is clear that $X_D \equiv H_D$. The columns of $\Amat[]{E}^T$ provide then a suitable set of basis vectors for $H_D$, as represented in the basis $\{ \mathbf{w_k}, k=1,\cdots, N \}$ of $X$. The equivalent for the continuous representation (\ref{jdec}) is here given by
\begin{equation}
    \Avec[]{I} = \Amat[]{K} \Avec[]{I_i} + \Amat[]{E}^T \Avec[]{I_e}
\end{equation}
The vector $\Avec[]{I_i} \in L_2(0,T,\mathbb{R}^{N_0})$ represent the internal currents not crossing any electrode, while the vector $\Avec[]{I_e} \in L_2(0,T,\mathbb{R}^{N_e-1})$ accounts for the current distributions impressed via the electrodes. Equation (\ref{eq:discrete_version_1b}) immediately reveals, since $\Amat[]{E}$ is a full-rank matrix:
\begin{equation}
    \label{eq:current_electrodes}
    \Avec[]{I_e} = ( \Amat[]{E} \, \Amat[]{E}^T )^{-1} \Avec[]{i_D} \, \rightarrow  \, \Avec[]{I} = \Amat[]{K} \Avec[]{I_i} + \Amat[]{E}^T ( \Amat[]{E} \, \Amat[]{E}^T )^{-1} \Avec[]{i_D} \, .
\end{equation}
{The vector $\Avec{I}_e$ describes the injected current distribution within structures in the basis $\{ \mathbf{v}_j(x) = \sum_k E_{j,k} \mathbf{w}_k(x), \, j=1,\cdots,N_e-1 \}$, in the same stream as the vector $\Avec[]{\alpha}^D$ in equation (\ref{alphaD}) describes the injected current distribution in the vector basis $\{ \mathbf{s}_k^D(x), \, k=1,\cdots,N_e-1 \}$. These current distributions can be different, anyway they reproduce exactly the same current flux through boundary electrodes.} Projecting equation (\ref{eq:discrete_version_1a}) to the subspace $X_0$, which corresponds to left multiplication by $\Amat[]{K}^T$, we find
%$\mathbf{j_D}(x,t) = \sum_k {\alpha}_k^D(t) \cdot {s_k^D(x)}$, with coefficients calculated via equation (\ref{alphaD}), and the current distribution $\mathbf{j_E}(x) = \sum_j \sum_k {I}_{E,j}(t) \cdot {E}_{j,k} \cdot {\mathbf{w}_k(x)}$, calculated via (\ref{eq:current_electrodes}),
\begin{equation}
    \label{eq:discrete_version_2}
    \Amat[]{R_i} \Avec[]{I_i} + \Amat[]{L_i} \deriv[]{}{t} \Avec[]{I_i} = - \Amat[]{R_{ie}} \Avec[]{I_e} - \Amat[]{L_{ie}} \deriv[]{}{t} \Avec[]{I_e} + \Avec{V_{D,i}} + \Avec{V_{0,i}} \, , 
\end{equation}
where
\begin{align}
    \Amat[]{R_i} = \Amat[]{K}^T \Amat[]{R} \Amat[]{K} \\
    \Amat[]{L_i} = \Amat[]{K}^T \Amat[]{L} \Amat[]{K} \\
    \Amat[]{R_{ie}} = \Amat[]{K}^T \Amat[]{R} \Amat[]{E}^T \\
    \Amat[]{L_{ie}} = \Amat[]{K}^T \Amat[]{L} \Amat[]{E}^T \\
    \Avec{V_{D,i}} = \Amat[]{K}^T \Avec{V_{D}} \\
    \Avec{V_{0,i}} = \Amat[]{K}^T \Avec{V_{0}}    
\end{align}
The problem of finding $\Avec[]{I} \in L_2(0,T, \mathbb{R}^N)$ satisfying equations (\ref{eq:discrete_version_1a})-(\ref{eq:discrete_version_1b}) has been reduced to the problem of finding $\Avec[]{I_i} \in L_2(0,T, \mathbb{R}^{N_0})$ satisfying equation (\ref{eq:discrete_version_2}) following the same ideas and the same strategy which lead us to reduce the problem of finding $\mathbf{j}(x,t) \in H_L(\Omega, \Sigma)$ satisfying equations (\ref{weak_form_Ik_a})-(\ref{weak_form_Ik_b}) to the problem of finding $\mathbf{j_0}(x,t) \in H_0$ satisfying equation (\ref{weak_form_J0}) in the continuous framework. 

The dynamical properties of the system are completely described by the resistance matrix $\Amat[]{R_i}$ and the inductance matrix $\Amat[]{L_i}$, which are both symmetric and positive definite. The matrices $\Amat[]{L_{ie}}$ and $\Amat[]{R_{ie}}$ account for the inductive and resistive coupling of the electrode-to-electrode currents $\Avec[]{I_e}$ with the internal currents $\Avec[]{I_i}$. 
\begin{rem}
    In definitions (\ref{eq:Vd_voltage})-(\ref{eq:V0_voltage}) we explicitly considered the magnetic vector potential due to external current sources as sum of two contributions, as suggested in Remark \ref{remark:as_ad_a0}. In practical applications the vectors of induced voltages $\Avec[]{V^D}(t)$ and $\Avec[]{V^0}(t)$ may be obtained from information on the current density within $\Omega_S$. In general it is possible to identify a finite subset of current distributions defining accurately the overall current density in the source domain, \textit{i.e.}
    \begin{equation}
    \label{eq:jsource}
        \mathbf{j_s}(x,t) = \sum_{k=1}^{N_e-1}{ i^D_k(t) \mathbf{u}^D_k(x)} + \sum_{k=1}^{N_0}{ i^0_k(t) \mathbf{u}^0_k(x)}
    \end{equation}
    where the vector fields $\mathbf{u^0}_k(x) \in H_L(\Omega_S)$ and $\mathbf{u^D}_k(x) \in H_L(\Omega_S, \Sigma)$ are prescribed forcing terms. In particular the $N_e-1$ current distributions $\mathbf{u^D}_k$ are associated to the net current fluxes through the interface electrodes with the conducting domain $\Sigma_k$. Representation (\ref{eq:jsource}) in this discrete context induce to define the mutual inductance matrices: 
    \begin{eqnarray}
      M^D_{ij} = \braket[]{\OA_S \mathbf{s_k}}{\mathbf{w_j}} \, , \\
      M^0_{ij} = \braket[]{\OA_S \mathbf{u_k}}{\mathbf{w_j}} \, .
    \end{eqnarray}
    where $\mathbf{s_k}$ are current distributions in $\Omega_S$ which drain current in the conducting domain via the electrodes $\Sigma_k$ while the $\mathbf{u_k}$ are solenoidal current distributions in $\Omega_S$. These mutual inductance matrices can be used to express the induced voltages in passive structures in terms of time variations of the current sources,
    \begin{eqnarray}
    \label{eq:induced_voltages_d}
      \Avec{V^D} = - \Amat[]{M^D} \deriv[]{}{t} \Avec[]{i^D} \, , \\
    \label{eq:induced_voltages_0}
      \Avec{V^0} = - \Amat[]{M^0} \deriv[]{}{t} \Avec[]{i^0} \, . 
    \end{eqnarray}    
\end{rem}
\begin{rem}

It is possible to plug equations (\ref{eq:current_electrodes}) and (\ref{eq:induced_voltages_d})-(\ref{eq:induced_voltages_0}) directly into the circuit equation (\ref{eq:discrete_version_2}) to make the current sources appear explicitly,
    \begin{eqnarray}
    \label{eq:circuit_equations_v2}
    \Amat[]{R_i} \Avec[]{I_i} + \Amat{L_i} \deriv[]{}{t} \Avec[]{I_i} = - \Amat[]{R_{D}} \Avec[]{i_D} - \Amat[]{L_D} \deriv[]{}{t} \Avec[]{i_D} - \Amat[]{M^0} \deriv[]{}{t} \Avec{i_0} \, ,    
    \end{eqnarray}
where we defined
    \begin{eqnarray}
        \Amat[]{R_D} = \Amat[]{R_{ie}} \, \left( \Amat[]{E} \Amat[]{E}^T \right)^{-1}  \, , \\
        \Amat[]{L_D} = \Amat[]{L_{ie}} \, \left( \Amat[]{E} \Amat[]{E}^T \right)^{-1} + \Amat[]{M^D} \, .
    \end{eqnarray}
\end{rem}

\subsection{Modal decomposition}

In this subsection we introduce the analog of the generalized eigenvalue problem (\ref{eq:generalized_eigenvalues}) within the \textit{discrete} context developed in this Section. In the next Section we will illustrate how to use generalized eigenvalues and eigenvectors to accelerate standard time-domain eddy currents simulations, taking an example from real applications. Zeroing all the source terms in equation (\ref{eq:circuit_equations_v2}), we are left with the homogeneous problem
\begin{equation}
    \label{eq:source_free_discrete}
    \Amat[]{L_i} \deriv[]{}{t} \Avec[]{I_i} + \Amat[]{R_i} \Avec[]{I_i} = \Avec[]{0} \, .
\end{equation}
The problem of finding $\Avec[]{I_i}(t) \in L_2(0,T;\mathbb{R}^{N_0})$ which satisfies (\ref{eq:source_free_discrete}) is the discrete analog of the problem of finding $\mathbf{j_0} \in L_2(0,T;H_0)$ which satisfies (\ref{source_free_b}), under corresponding initial conditions. Thanks to the Galerkin formulation adopted, the inductance matrix $\Amat{L}$ inherits the properties of symmetry and positive-definiteness from the operator
\begin{equation}
    (\mathbf{w_j}, \mathbf{w_k}) \in H_L(\Omega,\Sigma) \times H_L(\Omega,\Sigma) \mapsto \braket[]{\OA \mathbf{w_j}}{\mathbf{w_k}} \in \mathbb{R} \, .
\end{equation}
These properties are further preserved when the action of the operator is restricted to vector fields in $H_0$. Hence, the matrix $\Amat[]{L_i}$, which restricts the action of the matrix $\Amat[]{L}$ to vectors of $X_0$, is still symmetric and positive-definite. 
Similarly, the resistance matrix $\Amat{R}$ inherits the property of symmetry from the operator
\begin{equation}
    (\mathbf{w_j}, \mathbf{w_k}) \in H_0 \times H_0 \mapsto \braket[]{\eta \mathbf{w_j}}{\mathbf{w_k}} \in \mathbb{R} \, .    
\end{equation}
The resistance matrix $R$ is symmetric and positive definite in case of strictly passive conductors, or at least positive semi-definite in case some ideal conductors are present. Again $\Amat[]{R_i}$ inherits the properties of symmetry and positive definiteness from $\Amat[]{R}$, being its projection to the space of internal currents.

Since both $\Amat[]{L_i}$ and $\Amat[]{R_i}$ are real symmetric matrices and at least $\Amat[]{L_i}$ is positive definite, the generalized eigenvalue problem 
\begin{equation}
    \label{eq:generalized_eigenvalues_discrete}
    \Amat[]{L}_i \Avec[]{V_\tau} = \tau \Amat[]{R}_i \Avec[]{V_\tau}
\end{equation}
admits only real non-negative eigenvalues $\tau$, which are strictly positive in case $R$ is positive definite. Moreover the generalized eigenvectors $\Avec[]{V_\tau}$ constitute a basis of $\mathbb{R}^{N_0}$ whose elements are $L$- and $R$-orthogonal. The Monotonicity principle still holds in the discrete finite context. In particular, re-scaling the resistance matrix $\Amat[]{R}$ by some multiplicative factor $k$ the eigenvectors are preserved and the eigenvalues are re-scaled:
\begin{equation}
    \tau_n(k) = \frac{1}{k} \tau_n(1) \, .
\end{equation}
We collect the eigenvectors $V_\tau$ into the matrix $\Amat[]{V}$. We denote with a tilde vectors and matrices after we change the basis of $\mathbb{R}^{N_0}$ from the canonical one to the eigenvector basis. The original coupled system of first order Ordinary Differential Equations (\ref{eq:circuit_equations_v2}) is decoupled into $N_0$ non-interacting ODEs when moving to the basis of generalized eigenvectors defined in (\ref{eq:generalized_eigenvalues_discrete}):
\begin{equation}
    \label{eq:modal_decomposition_discrete}
    L_n I_n' + R_n I_n = E_n \, , \forall n=1,...,N_0 \, .
\end{equation}
In equation (\ref{eq:modal_decomposition_discrete}) $L_n = \Avec[]{V_n}^T \, \Amat[]{L} \, \Avec[]{V_n}$, $R_n = \Avec[]{V_n}^T \, \Amat[]{R} \, \Avec[]{V_n}$ and the forcing term is $E_n = - \Avec[]{V_n}^T ( \Amat[]{L^D} \Avec[]{i_D}' + \Amat[]{R^D} \Avec[]{i_D} + \Amat{M_0} \Avec{i_0}') $. The time constant associated to the $n$-th eigenmode is clearly given by $\tau_n = L_n / R_n$. We can advance separately in time each eigenmode of the dynamical system, providing the engineer with several possible exact or approximate methods for the fast time integration of the system. 

\subsection{Time integration scheme}

The modal decomposition of the dynamical system opens a new range of opportunities for the time integration of the dynamical system, since analytical solutions to polynomial input signals become now easily manageable. We postpone the exploration of analytical and semi-analytical techniques to future work on the subject, while here we explore the computational advantages obtained via modal decomposition even in the framework of the same time integration technique. 

\subsubsection{Cholesky decomposition}
The standard approach to find the solution $\Avec[]{I_i}(t) \in L_2(0,T,\mathbb{R}^{N_0})$ of the discrete eddy current problem (\ref{eq:circuit_equations_v2}) is to use a $\theta$-method, so to allow eventually for larger time steps:
\begin{equation}
\begin{aligned}
    \left( \Amat[]{{L_i}}  + \Delta t \, \theta \, \Amat[]{{R_i}} \right) \,  \delta \Avec[]{I} = & - \Delta t  \Amat[]{R_i} \, \Avec[]{I_i}^{(k)} \\
    & - \Delta t \Amat[]{{R}_{D}} \, \Avec[]{i_D}^{(k)} -  \left( \Amat[]{L_{D}}  + \Delta t \, \theta \, \Amat[]{R_{D}} \right) \, \delta \Avec[]{{i}_D} - \Amat[]{M_0} \delta\Avec[]{i_0}     
\end{aligned}
\label{eq:time_integration}
\end{equation}
where
\begin{equation}
    \delta \Avec[]{x} = \left( \Avec[]{x}^{(k+1)} - \Avec{x}^{(k)} \right) \, .
\end{equation}
Here $\theta$ is a numerical parameter to choose in the interval $[0,1]$, the implicit Euler scheme corresponding to the case $\theta = 1$. Since the matrix $\Amat[]{Z} = \left( \Amat[]{L} + \Delta t \, \theta \Amat[]{R} \right)$ is symmetric and positive definite, a convenient method to solve (\ref{eq:time_integration}) is based on the Cholesky decomposition. For fixed time integration scheme (\textit{i.e.} fixed $\theta$ parameter), and for fixed time step $\Delta t$, we can compute the Cholesky decomposition of $\Amat[]{Z}$ once for all, \textit{i.e.} $\Amat[]{Z} = \Amat[]{C} \, \Amat[]{C}^T$, where $\Amat[]{C}$ is a real lower triangular matrix. At each time step we just need to solve the lower triangular problem $\Amat[]{T} \, \Avec[]{y} = \Avec[]{b}$ (forward substitution) and the upper triangular problem $\Amat[]{T}^T \, \delta \Avec[]{I} = \Avec[]{y}$ (backward substitution). We defined with the symbol $\Avec[]{b}$ the right hand side of the linear algebraic system, depending solely to the currents at the previous time step $\Avec[]{I_i}^{(n)}$ and to the forcing terms. The overall computational cost of a simulation will be proportional to the cost of computing the Cholesky decomposition for the dynamic matrix $\Amat[]{Z}$, plus the cost of the backward and forward substitutions at each time step. 

\subsubsection{Modal decomposition}

Applying the $\theta$-method described in previous section to the set of $N_0$ linearly independent equations (\ref{eq:modal_decomposition_discrete}), we find
\begin{equation}
    \left( L_n + \Delta t \, \theta \, R_n  \right) \delta I_n = - \Delta t R_n I_n^{(k)} + \theta E_n^{(k+1)} + (1-\theta) E_n^{(k)}
\end{equation}
which is immediate to invert. The overall computational cost of a simulation will be proportional to the cost of computing the generalized eigenvalues and eigenvectors, performing necessary projections (see definition of $E_n$ below equation (\ref{eq:modal_decomposition_discrete})), plus the cost of solving $N$ independent linear scalar equations at each time step.

\section{Results}
\label{sec:results}
%IN MAGENTA LE MODIFICHE DI VM.

In this last section, the performances of the proposed method are evaluated on a practical problem in the framework of Non destructive Testing (NdT) based on eddy currents. Eddy currents can be effectively exploited to detect and retrieve useful information about defects in metallic samples. Indeed, the presence of flaws produces a perturbation in the eddy current circulation, resulting in a peculiar reaction magnetic flux density. In this context, numerical simulations play a paramount role, being an essential tool in designing and optimizing the equipment required in the overall inspection process.

However, the computational time required by such kind of numerical simulations can be very relevant and, hence, there is a strong demand of new numerical techniques able to accelerate the process. In Section~\ref{sec:comp}, it is shown that the proposed method is able to drastically reduce the computational time with respect to the already available methods.

\subsection{Proposed NdT method}
{The sample under test is a commercial cylindrical tube with an outer diameter of $168.28\,\text{mm}$, a thickness of $3\,\text{mm}$ and a length of $1.5\,\text{m}$. The excitation is given by alternating currents injected through two diametrically opposite pairs of conductive legs at the ends of the tube (see Figure~\ref{fig:tube}). The electrical resistivity is $\eta=1.09\times 10^{-6}\,\Omega\cdot m$, which is the one of a particular aluminium alloy (Inconel 600), a material widely used in manufacturing steam generators tubes for nuclear power plants~\cite{LIM200397}.}

\begin{figure}[htp]
    \centering
    \subfloat[][\emph{Sample geometry}]
    {\includegraphics[width=0.21\linewidth]{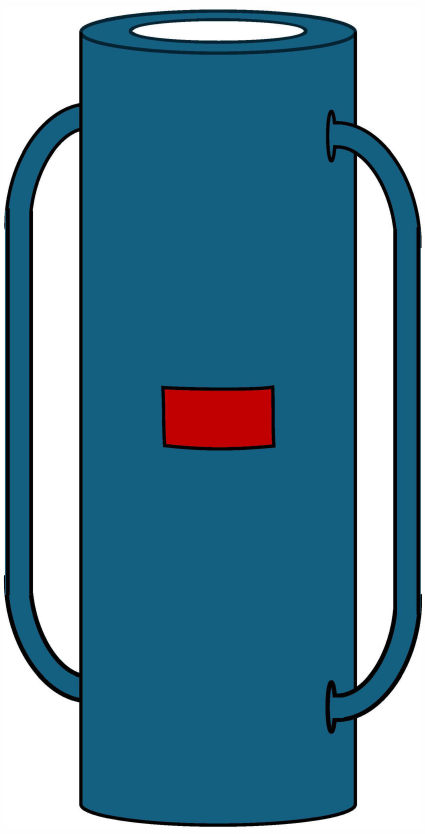}} \quad\quad\quad
    \subfloat[][\emph{Excitation block scheme}]
    {\includegraphics[width=0.45\linewidth]{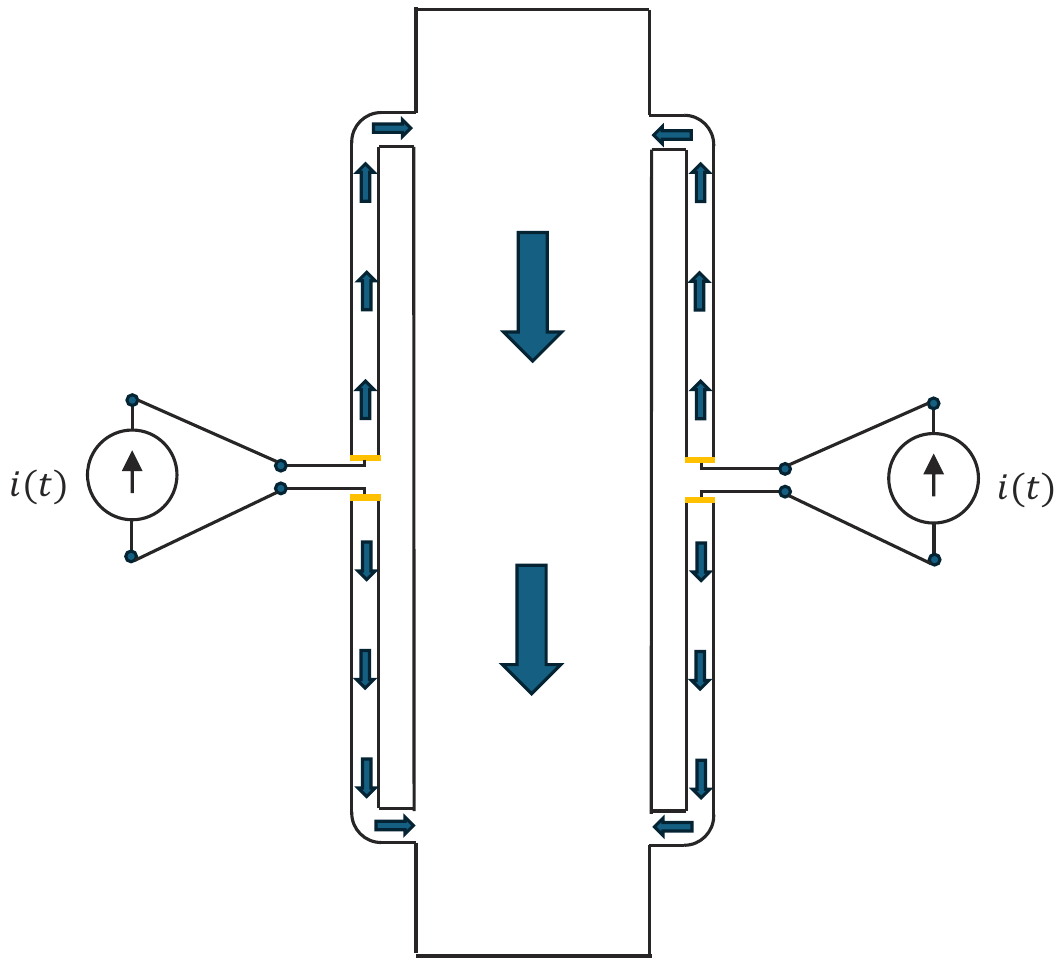}}
    \caption{Left: Geometry of the sample, in red the position of the analyzed cracks. Right: Injection of the exciting current. In yellow the two pairs of electrodes, the blue arrow represents the current density (not in scale).}
    \label{fig:tube}
\end{figure}

{In this work, the aim is to retrieve information about flaws by means of magnetic flux density measurements. Specifically, the response of a defect-free sample is simulated and used as reference. Subsequently, the responses from configurations with different defects are compared with the reference one, analyzing the behaviour of the system at several different excitation frequencies. For a detailed description of the data processing see Section~\ref{sec:postpro}.}

\subsection{Defects definition}
{Three different configurations are analyzed: a defect-free configuration used as reference (background signal), a tube with a surface crack situated in the outermost position and a configuration with a deep crack located in the innermost tube position. Figure~\ref{fig:cracks} shows the details of the two cracks.}

\begin{figure}[htp]
    \centering
    \subfloat[][\emph{Deep Crack: lateral view}]
    {\includegraphics[width=.45\textwidth]{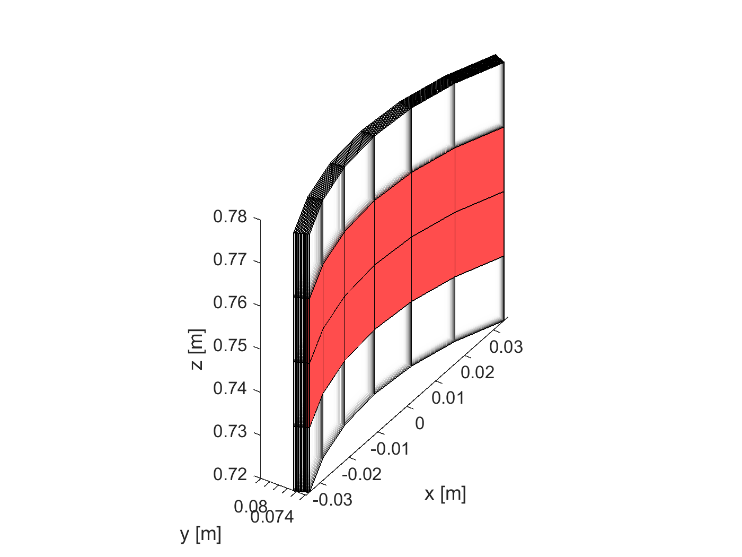}} \quad
    \subfloat[][\emph{Deep Crack: top view}]
    {\includegraphics[width=.20\textwidth]{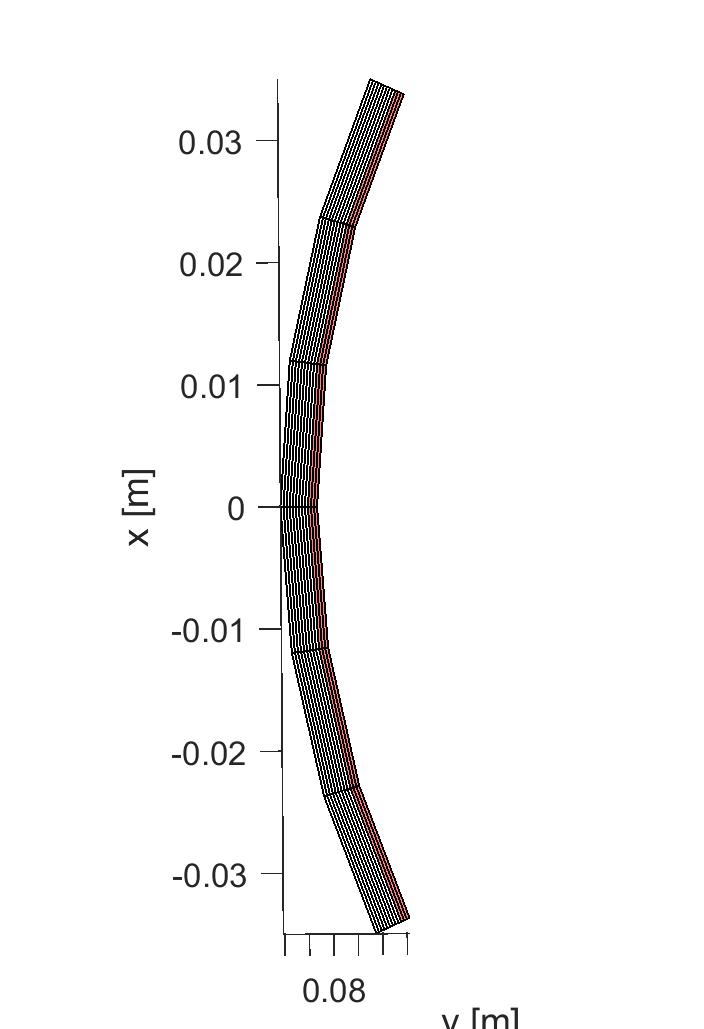}} \\
    \subfloat[][\emph{Surface Crack: lateral view}]
    {\includegraphics[width=.30\textwidth]{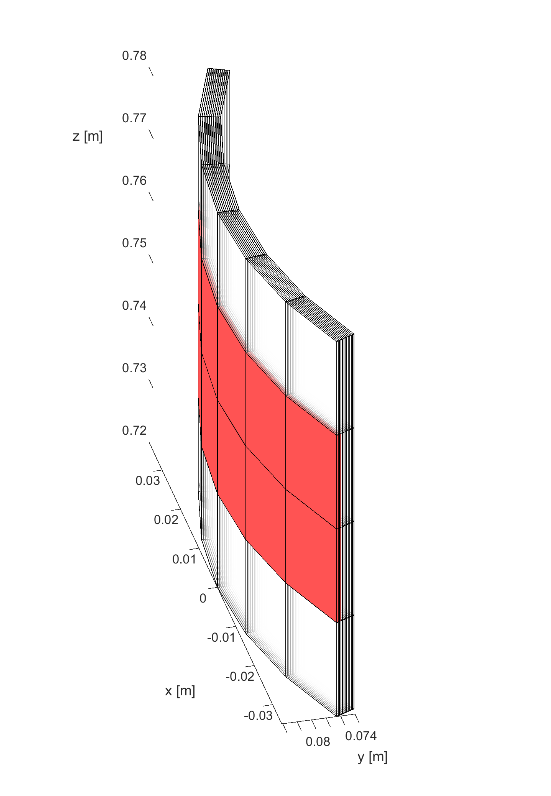}} \quad
    \subfloat[][\emph{Surface Crack: top view}]
    {\includegraphics[width=.20\textwidth]{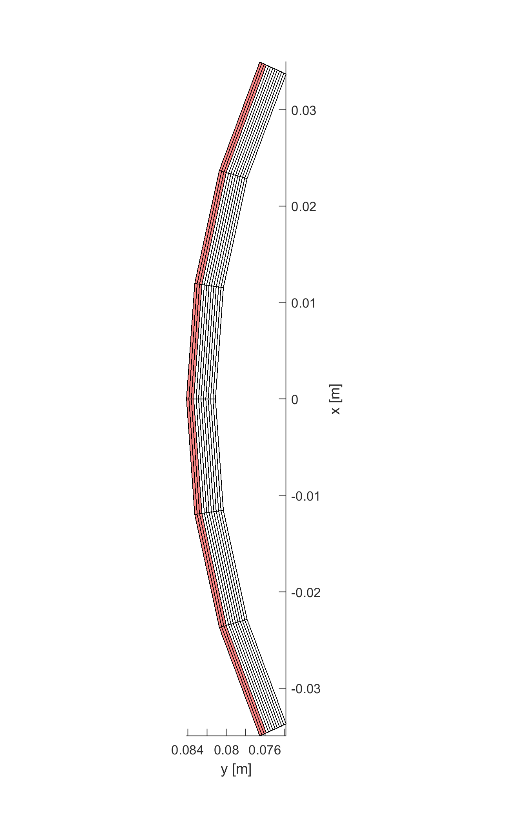}}
    \caption{Simulated cracks}
    \label{fig:cracks}
\end{figure}

{The different location of the cracks is dictated by the need to verify the performances of the eddy-currents testing system on all the range of frequencies selected. Indeed, by a physical point of view, the inner cracks react only to lower frequencies, due to the penetration depth of the magnetic field, while surface cracks can be revealed by both lower and higher frequencies, but with different kinds of magnetic response.}

\subsection{Excitation current}
{In order to retrieve data from the cracks at different frequencies, a multi-sine waveform is adopted as excitation current}
\begin{equation}\label{eqn:exc}
    i(t)=\sum_{k=1}^{N_s}I_k\sin\left(2\pi f_k t+\phi_k\right).
\end{equation}
{Such signals are commonly used in eddy current testing, since they can significantly reduce measurement time in practical applications, giving the opportunity to collect information from various frequencies in only one measurement step~\cite{Sardellitti_2022}. On the other hand, this excitation poses serious challenges from a computational point of view. Indeed, a proper time-domain simulation of the sample response requires long simulation times (due to the tones at lower frequencies) and small time steps (due to the tones at higher frequencies).}

{For the case of interest, $N_s=30$ tones with equal amplitude $I_k=4\,\text{A}$ are adopted. The tone frequencies are $F=\{$1, 50, 100, 150, 250, 300, 400, 600, 700, 800, 900, 1000, 1500, 2000, 2500, 3000, 3500, 4000, 4500, 5000, 5500, 6000, 6500, 7000, 7500, 8000, 8500, 9000, 9500, 10000$\}$ Hz}
{and they are selected such that the penetration depth varies in a range compatible with the thickness of the tube. Regarding the phases $\phi_k$, they are chosen in order to avoid a large peak to RMS ratio of $i(t)$. As described in~\cite{Sardellitti_2022}, an optimal choice is}
\begin{equation*}
    \phi_k=-\pi\frac{k(k-1)}{N_s},
\end{equation*}
{which ensures that $i(t)$ has a quite constant envelope, a fundamental requirement to optimize the range of the analog to digital converter in acquisition. The obtained excitation signal is depicted in Figure~\ref{fig:currentsignal}.}

\begin{figure}
    \centering
    \includegraphics[width=0.75\textwidth]{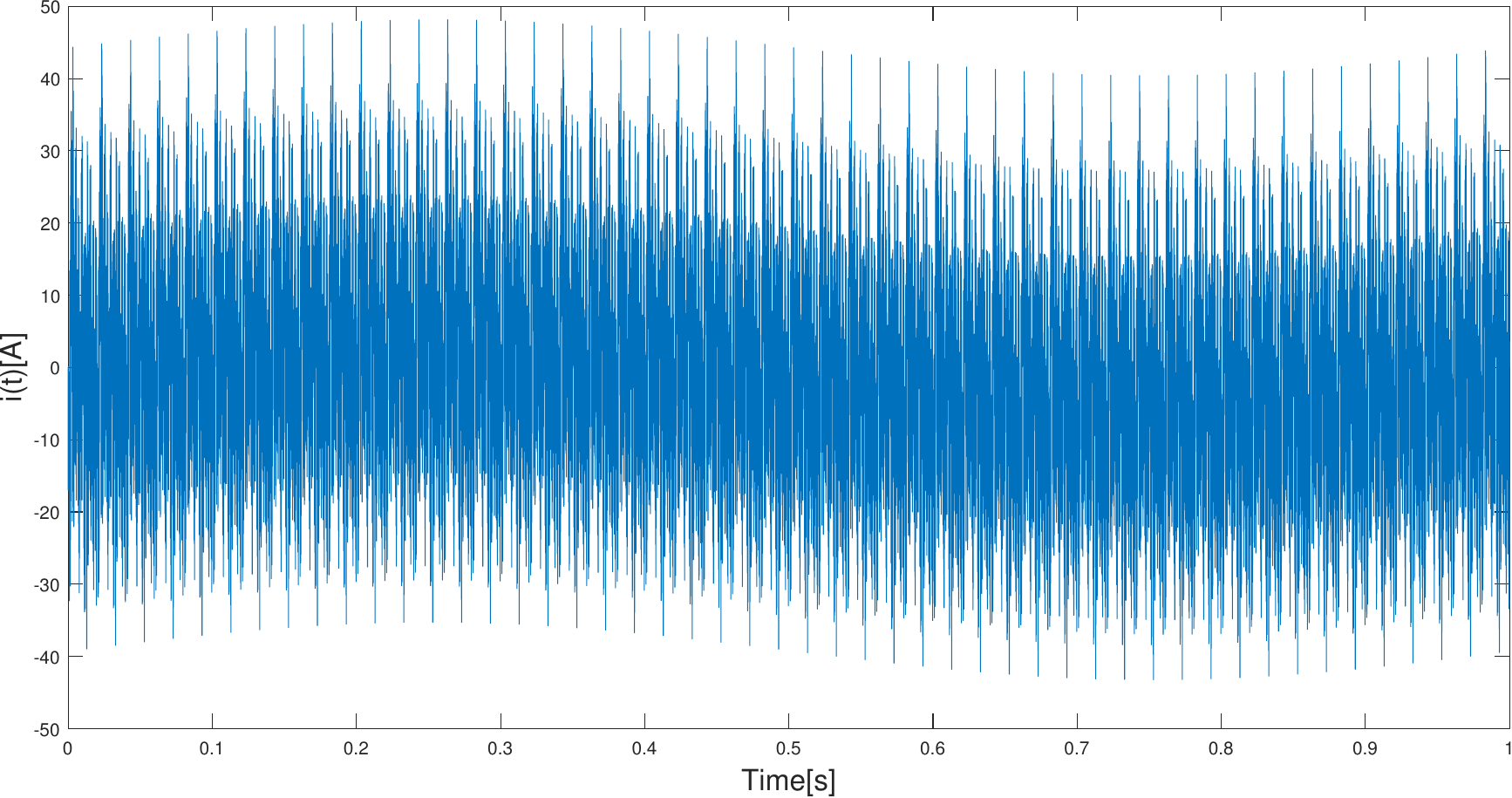}
    \caption{Injected current. {The current has a quite constant envelope, with a maximum value around $48\,\text{A}$ and a minimum value around $-43\,\text{A}$.}}
    \label{fig:currentsignal}
\end{figure}

\subsection{Sensing points definition}
{In order to characterize the response of the two cracks analyzed, the values of the magnetic flux density are collected from $N_p=301$ points placed around the tube, simulating the measurements made by an array of magnetic sensors. Specifically, the points are distributed at a fixed distance of $3\,\text{mm}$ from the tube, a fixed height of $750\,\text{mm}$ from the bottom of the tube and they span for $\pi$ radians around the defect (see Figure~\ref{fig:scan}).}

\begin{figure}[htb]
    \centering
    \subfloat[][\emph{Sensing points}]
    {\includegraphics[width=0.45\linewidth]{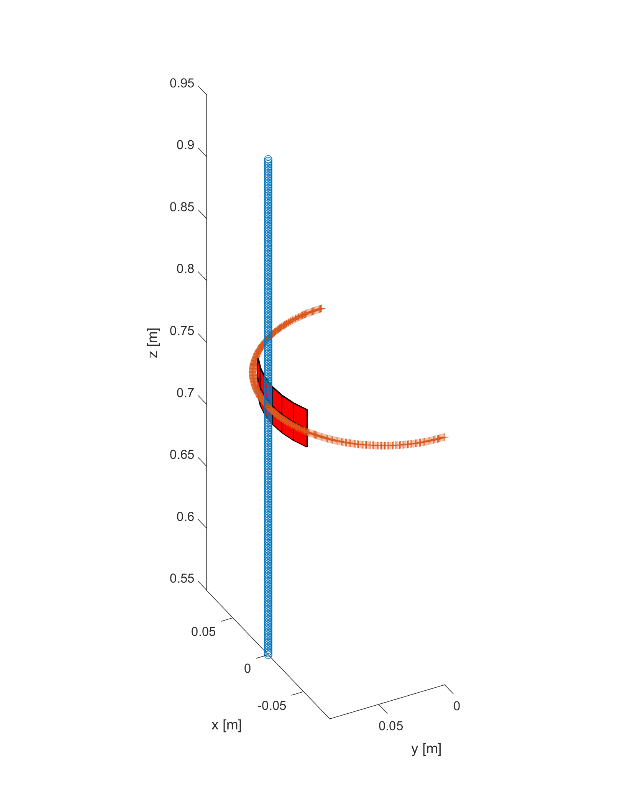}}
    \subfloat[][\emph{Perturbation by flaws}]
    {\includegraphics[width=0.45\linewidth]{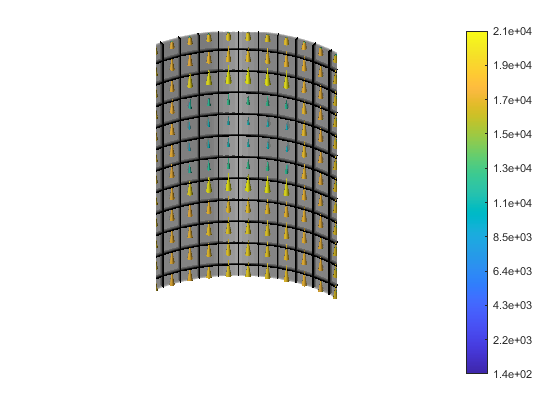}}
    \caption{Left: position of sensing points. Right: zoom showing the density current perturbation due to the crack}    
    \label{fig:scan}
\end{figure}

\subsection{Crack signature definition}\label{sec:postpro}
{The data collected from the sensing points are subsequently processed to obtain useful information about the cracks. The purpose is to characterize the different cracks through an appropriate figure of merit capable of providing a representation of the peculiar characteristics of each. Such figure is called \emph{crack signature} and allows to evaluate the distinctive response of a crack with respect to the measurement points, for a certain excitation frequency. Assuming the $e^{j\omega_k t}$ time behaviour, with $\omega_k=2\pi f_k$, the \emph{crack signature}, for the two crack analyzed, is defined as}
\begin{align}
        \Delta \mathbf{B}_k^{ci}(\varphi_h)&=\mathbf{B}_k^{ci}(\varphi_h)-\mathbf{B}_k^{bg}(\varphi_h) \label{eqn:diff1} \\
        \Delta \mathbf{B}_k^{co}(\varphi_h)&=\mathbf{B}_k^{co}(\varphi_h)-\mathbf{B}_k^{bg}(\varphi_h), \label{eqn:diff2}
\end{align}
{where $\mathbf{B}_k(\varphi_h)$ is the phasor related to the $\varphi$-component of the magnetic flux density at the frequency $f_k$ and in the measurement point described by the angular coordinate $\varphi_h$. The superscripts $bg$, $ci$, $co$ are referred to the background configuration, to the configuration with the inner crack and the configuration with the outer crack, respectively.}

{From the \emph{crack signature}, it is possible to infer some useful information about cracks shape and position. For example, the signals produced by the surface crack have an higher amplitude for higher frequencies even if it can be detected by means of lower frequencies also. On the other hand, useful information on the inner crack can be obtained by using higher frequencies only, due to the penetration depth. For the cracks analyzed in this paper, similar considerations can be drawn from Figure~\ref{fig:sig}, where the behaviour of $\Delta \mathbf{B}_1^{ci}$, $\Delta \mathbf{B}_1^{co}$ and $\Delta \mathbf{B}_{30}^{ci}$, $\Delta \mathbf{B}_{30}^{co}$ with respect to the sensing points is depicted in the complex plane.} 

\begin{figure}[htp]
    \centering
    \subfloat[][\emph{Normalized $\Delta \mathbf{B}_1^{ci}$, $f=1\,\text{Hz}$}]
    {\includegraphics[width=.45\textwidth]{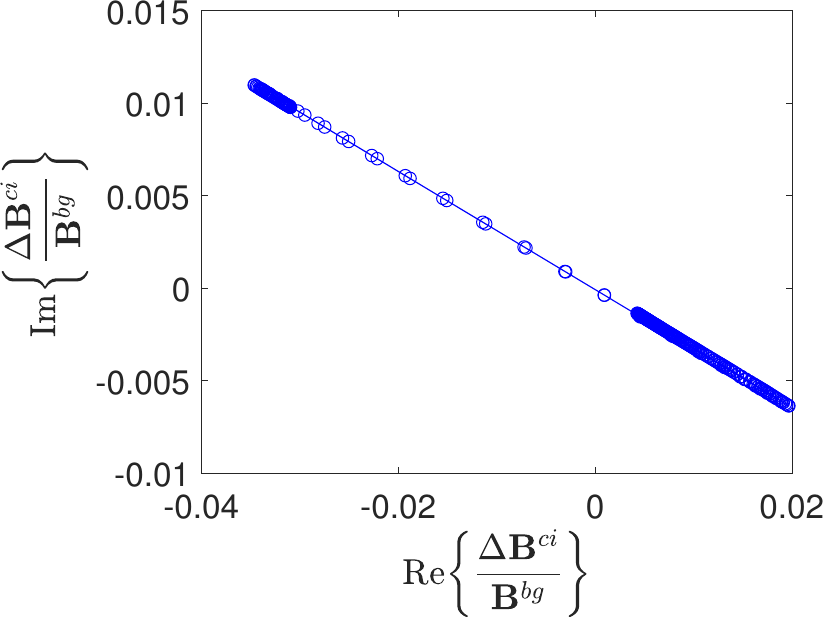}} \quad
    \subfloat[][\emph{Normalized $\Delta \mathbf{B}_{30}^{ci}$, $f=10\,\text{kHz}$}]
    {\includegraphics[width=.45\textwidth]{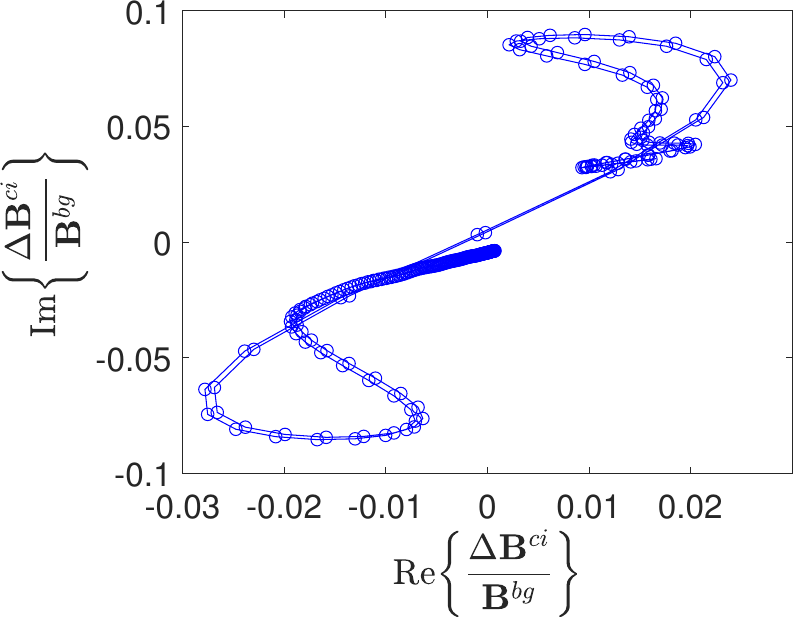}} \\
    \subfloat[][\emph{Normalized $\Delta \mathbf{B}_1^{co}$, $f=1\,\text{Hz}$}]
    {\includegraphics[width=.45\textwidth]{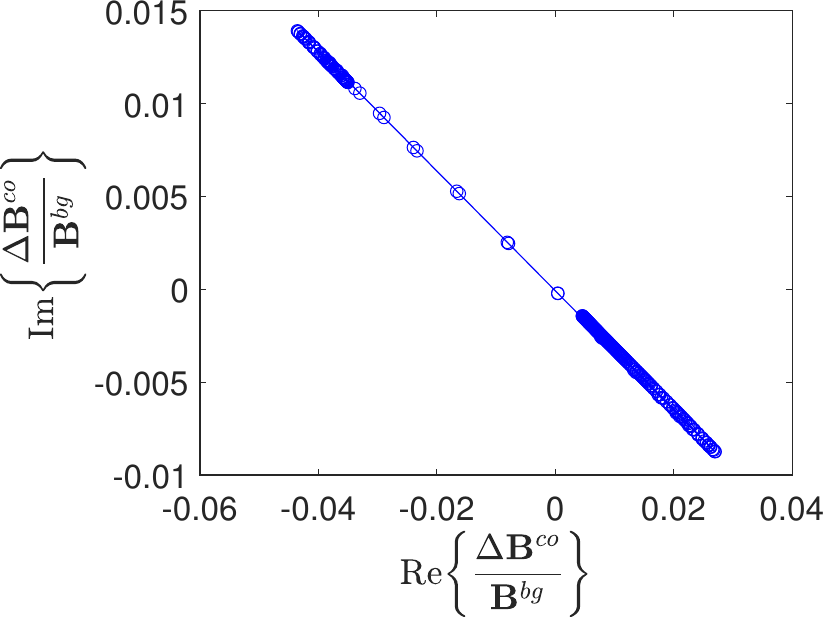}} \quad
    \subfloat[][\emph{Normalized $\Delta \mathbf{B}_{30}^{co}$, $f=10\,\text{kHz}$}]
    {\includegraphics[width=.45\textwidth]{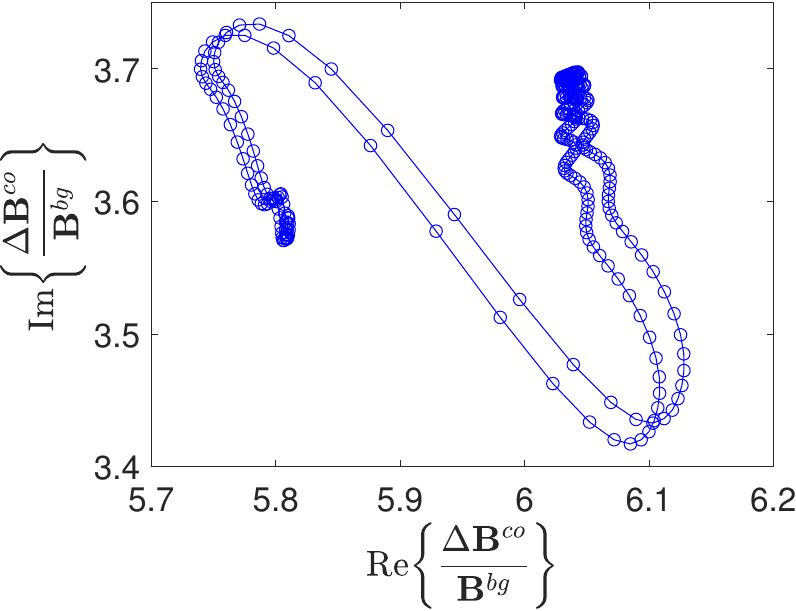}}
    \caption{Characterization of the cracks response. Top: outer crack, bottom: inner crack.}
    \label{fig:sig}
\end{figure}

\subsection{Simulation techniques}
{As underlined in the previous section, the final result of the computation is the quantity $\Delta \mathbf{B}_k$, for the $N_s$ frequencies of interest.}

{From a numerical point of view, the \emph{crack signature} can be evaluated by two different approaches:}
\begin{itemize}
    \item techniques in the frequency domain (FD);
    \item techniques in the time domain (TD).
\end{itemize}

{In the frequency-domain approach, problem~\eqref{weak_form_Ik_a},~\eqref{weak_form_Ik_b} is solved for an excitation $i(t)=\sqrt{2}I_k\sin(2\pi f_k t)$ and a solution of the form $\mathbf{j}(x,t)=\operatorname{Im}\{\mathbf{j_r}(x)e^{j\omega_k t}\}$, where $\omega_k=2\pi f_k$ is the angular frequency. In this case the solution of $N_s$ different problems is required. It is worth noting with this techniques, $N_s$ factorizations of dense matrices are required.}

{In the time-domain approach, the solver computes the solution of problem~\eqref{weak_form_Ik_a},~\eqref{weak_form_Ik_b} for the input current reported in~\eqref{eqn:exc}. The computation result is the time varying behaviour of the magnetic flux density. For the case of interest, a sample time of $30,000$ samples per second is chosen. The phasors of interest have to be subsequently computed and the details are reported below.}

{The signals resulting from the computation are denoted with $B_{\varphi}^{bg}(\varphi,t)$, $B_{\varphi}^{ci}(\varphi,t)$ and $B_{\varphi}^{co}(\varphi,t)$, where $\varphi$ is the angular coordinate in the standard cylindrical coordinates, $B_{\varphi}$ stands for the $\varphi$ component of the magnetic flux density field and the superscripts $bg$, $ci$, $co$ are defined as in Section~\ref{sec:postpro}. The data are subsequently processed as follows}
{
\begin{enumerate}
    \item Evaluating the Fourier transforms $B_{\varphi}^{bg}(\varphi,f)$, $B_{\varphi}^{ci}(\varphi,f)$, $B_{\varphi}^{co}(\varphi,f)$ of $B_{\varphi}^{bg}(\varphi,t)$, $B_{\varphi}^{ci}(\varphi,t)$, $B_{\varphi}^{co}(\varphi,t)$, respectively.
    %\item Extracting the amplitude $\lvert B_{\varphi}^{bg}(\varphi_h,f_k)\rvert$ and the phase $\phase{B_{\varphi}^{bg}(\varphi_h,f_k)}$, for each of the $N_s$ frequencies $f_k\in F$ of the excitation signal and for each of the $N_p$ sensing points $\varphi_h$ \nicola{Questo punto è veramente necessario? (non può essere dato per scontato con il successivo in questo elenco? Non vedo differenze sostanziali con il punto (4) che invece è "da solo")}
    \item Assembling the phasors $\{\mathbf{B}^{bg}_k(\varphi_h)\}_{h,k}=\lvert B_{\varphi}^{bg}(\varphi_h,f_k)\rvert \exp\left(j\phase{B_{\varphi}^{bg}(\varphi_h,f_k)}\right)$ related to the background signal, for each $h,k$. 
    \item Assembling the phasors $\{\mathbf{B}^{ci}_k(\varphi_h)\}_{h,k}$, $\{\mathbf{B}^{co}_k(\varphi_h)\}_{h,k}$ related to the signals $B_{\varphi}^{ci}$ and $B_{\varphi}^{co}$, respectively.
    \item Computing the \emph{crack signatures} as defined in~\eqref{eqn:diff1},~\eqref{eqn:diff2}.
\end{enumerate}
}

{While we limit the analysis to the standard method for the frequency domain approach (FD), in the time-domain framework, the two different techniques described in previous Section are investigated, namely:}
{
\begin{itemize}
    \item traditional integration scheme based on Cholesky factorization (TD1);
    \item the proposed method based on mode expansion of the field (TD2).
\end{itemize}}

{The numerical examples are carried out by means of in-house software implementing the three methods considered in the comparison. The software is based on the Finite Elements Method and implements parallel computing. Linear algebra computations are based on the usage of SCALAPACK library and, in particular, PDPOTRF is used for the factorization of real matrices (TD1), PDSYGVX for the solution of the generalized eigenvalue problem (TD2) and PZGETRF for the factorization of complex matrices (FD).}

{The mesh employed in domain's discretization has $58332$ nodes and $52940$ hexahedral elements, with a total of $100676$ degrees of freedom (see Figure \ref{fig:mesh}).}
\begin{figure}[htb]
    \centering
    \includegraphics[width=0.25\textwidth]{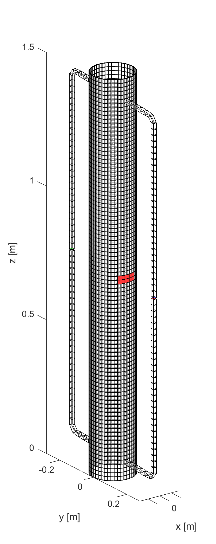}
    \caption{Mesh employed in the calculations.}
    \label{fig:mesh}
\end{figure}

\subsection{Computational time comparison}~\label{sec:comp}
{The simulation times for the problem described in the previous Section with the three mentioned different techniques are reported in Table~\ref{tab:timing}.}

\begin{table}[htb]
    \centering
    \begin{tabular}{c|ccc}
        Method & Setup Time (h) & Transient (h) & Total (h)  \\
        \hline
        FD          & $\sim 40$ & /          & $\sim 40$ \\
        TD1         & 0.5       & $\sim 70$  & $\sim 71$ \\
        TD2         & $\sim 8$  & $\sim 12$  & $\sim 20$ \\        
    \end{tabular}
    \caption{Time comparison between the proposed approach, a traditional integration scheme and the solution in the frequency domain.}
    \label{tab:timing}
\end{table}

{The presented simulation times are achieved utilizing 49 MPI processes on a shared-memory machine. This machine is equipped with dual AMD Epyc 32-Core 7452 processors and boasts 1 TB of RAM.}

{As it can be seen, the proposed approach allows to obtained a relevant gain (about 3.5x) on the total computational time, reducing it from 3 days to 20 hours roughly, with respect to the traditional integration scheme in the time domain. The gain is achieved by a significant decrease of the time needed for the solution of the transient problem, even if the setup time required is much higher with respect to the Cholesky decomposition. Furthermore, the method performs well with respect to the solution in the frequency domain, where the gain is roughly of 2x.}

{In this sense, the proposed approach can outperform traditional methods when the simulation of a large number of time steps is required or a relevant number of different frequencies have to be taken into account. } {Moreover, the longer set-up time is largely compensated when several time domain simulations of the same geometry with different driving terms, and possibly with different time steps, are planned.}

\section{Conclusions}\label{sec5}

In this manuscript we illustrated the use of a modal decomposition approach for the simulation of large time-domain eddy current problems, in presence of injected currents in the conductors through equipotential electrodes. In this respect, we extended the approach of in \cite{tamburrino2021themonotonicity}, where the modal decomposition was introduced in the absence of impressed currents. The key ingredient to get the modal decomposition for this more general case is the identification of two orthogonal subspaces generating $H_L(\Omega,\Sigma)$ as a direct sum. In particular, the space of divergence-free vector fields in $\Omega$ is decomposed into the subspace of purely solenoidal vector fields in $\Omega$ ($H_0$) and its orthogonal subspace ($H_D$). We proved that the finite dimensional subspace $H_D$ has a dimension of $N_e - 1$, where $N_e$ is the number of boundary electrodes.
%The choice of the $N_e - 1$ vector fields spanning $H_D$ is arbitrary, as long as it is possible to reproduce the current fluxes through the electrodes via these basis vectors.
The current densities in $H_D$ are related to the injected currents across the electrodes, (see (\ref{alphaD}). On the other hand, the infinite discrete eigenmodes of the problem are related to the current distributions in $H_0$, and represent solenoidal current densities circulating into the interior of $\Omega$, the conducting domain (see\ref{eq:generalized_eigenvalues}). 

In Section \ref{sec:discrete_model} we examined the implications of our continuum considerations to finite element approximations of problem (\ref{alphaD})-(\ref{weak_form_J0}). In particular we considered a Galerkin formulation which takes as space of test functions any proper finite vector subspace $X$ of $H_L(\Omega,\Sigma)$. The discrete version of the current injection operator (\ref{eq:operator_electrodes}), resulted in a reduced incidence matrix $\Amat[]{E}$, defined in \ref{eq:definition_E}. Matrix $\Amat[]{E}$ describes Kirchhoff's current laws at the electrodes, in terms of actual degrees of freedom for the eddy current problem. Most important, the image of $\Amat[]{E}^T$ provides a convenient choice for the finite dimensional space $H_D$. Indeed with this choice the matrix $\Amat[]{F}$ introduced in equation (\ref{eq:definition_electrode_incidence_matrix_continuum}) for the original problem coincide with the matrix product $\Amat[]{E}^T \, \Amat[]{E}$. The discrete version of the integral problem (\ref{alphaD})-(\ref{weak_form_J0}) was given in equations (\ref{eq:current_electrodes})-(\ref{eq:discrete_version_2}), and again the eigenmodes of the problem were associated to (a finite subset of) solenoidal currents in the conducting domain, see equation (\ref{eq:modal_decomposition_discrete}).

In Section \ref{sec:results} we verified the efficacy of the modal decomposition approach developed in this manuscript for a Non-destructive-Testing application. In particular we considered a typical pipe used in steam turbines of nuclear power plants, and simulated the crack signature of two possible pipe defects both via the modal decomposition approach and via the usual Cholesky factorization. The determination of generalized eigenvectors and eigenvalues requires much more time than the Cholesky factorization of the problem ($\simeq 8$ h and $0.5$ h respectively). Anyway the solution of the diagonal problem at each time step requires only $\simeq N$ multiplications, \textit{i.e.} much less than the $\simeq N^2$ multiplications required by the forward and backward substitution when solving the problem via the Cholesky-factorized dynamic matrix. In our test case, a wide frequency range of the current drive was necessary to identify the crack features. Hence it was necessary to set up a relatively long simulation compared to the highest frequency we wanted to observe, resulting in a large number of samples. This circumstance determined an overall advantage in using a modal decomposition approach compared to the standard one. NDT physicist and engineers which need to perform long eddy currents time-domain simulations with a relatively small time step, may benefit of the modal decomposition approach discussed in this manuscript. This advantage is even more evident when different simulations of the same geometry with different time steps need to be performed: the generalized eigenmode analysis is independent from the chosen time step, and the inversion of a diagonal matrix requires only $N$ operations. On the other hand the matrices resulting from the Cholesky factorization depend on the time step, and the factorization has to be repeated any time one wants to change the time step of the simulation. 

%\backmatter

\section*{Acknowledgments}
The authors wish to thank F. Villone and G. Rubinacci for useful suggestions. 

%\subsection*{Author contributions}

%This is an author contribution text. This is an author contribution text. This is an author contribution text. This is an author contribution text. This is an author contribution text.

%\subsection*{Financial disclosure}

%None reported.

%\subsection*{Conflict of interest}

%The authors declare no potential conflict of interests.

%\section*{Supporting information}

%The following supporting information is available as part of the online article:

%\noindent
%\textbf{Figure S1.}
%{500{\uns}hPa geopotential anomalies for GC2C calculated against the ERA Interim reanalysis. The period is 1989--2008.}

%\noindent
%\textbf{Figure S2.}
%{The SST anomalies for GC2C calculated against the observations (OIsst).}

\nocite{*}% Show all bib entries - both cited and uncited; comment this line to view only cited bib entries;
\bibliography{reference_repository_3}%

\clearpage

%\section*{Author Biography}

%\begin{biography}{\includegraphics[width=66pt,height=86pt,draft]{}}{\textbf{Author Name.} This is sample author biography text this is sample author biography text this is sample author biography text this is sample author biography text this is sample author biography text this is sample author biography text this is sample author biography text this is sample author biography text this is sample author biography text this is sample author biography text this is sample author biography text this is sample author biography text this is sample author biography text this is sample author biography text this is sample author biography text this is sample author biography text this is sample author biography text this is sample author biography text this is sample author biography text this is sample author biography text this is sample author biography text.}
%\end{biography}

\end{document}